\pgfplotsset{compat=newest}
\def\tblskip{\vspace\tblskipamount}
\newskip\tblskipamount
\newcommand{\hlinetop}{\hline\noalign{\tblskip}}
\newcommand{\hlinemid}{\noalign{\tblskip}\hline\noalign{\tblskip}}
\newcommand{\hlinebot}{\noalign{\tblskip}\hline}
\newcommand{\Tblref}[1]{Table~\ref{#1}}
\newcommand{\tblref}[1]{Table~\ref{#1}}
\newcommand{\figref}[1]{Figure~\ref{#1}}
\newcommand{\Figref}[1]{Figure~\ref{#1}}
\newcommand{\thmref}[1]{Theorem~\ref{#1}}
\newcommand{\secref}[1]{Section~\ref{#1}}
\renewcommand{\algref}[1]{Algorithm~\ref{#1}}
\newcommand{\inner}[2]{\langle#1,#2\rangle}
\newcommand{\tr}{{}^T}
\DeclareMathOperator*{\argmin}{argmin}
\newcommand{\score}{\phi}
\newcommand{\nodes}{V}
\newcommand{\set}[1]{\uppercase{#1}}
\renewcommand{\vec}[1]{\mathbf{#1}}
\newcommand{\setc}{\set{c}}
\newcommand{\setctrue}{\set{c}^*}
\newcommand{\vecc}{\vec{c}}
\newcommand{\vecs}{\vec{s}}
\newcommand{\setseed}{\set{s}}
\newcommand{\cit}[1]{\vec{{c}}^{(#1)}}
\newcommand{\Cit}[1]{\set{c}^{(#1)}}
\newcommand{\citidx}[1]{{c}^{(#1)}}
\newcommand{\step}{\gamma}
\newcommand{\stepit}[1]{\step^{(#1)}}
\newcommand{\proj}{p}
\newcommand{\dist}{d}
\newcommand{\setdist}[1]{D_{#1}}
\newcommand{\indicator}[1]{[#1]}
\renewcommand{\complement}[1]{\overline{#1}}
\newcommand{\emc}{\textsc{EMc}}
\newcommand{\pgd}{\textsc{PGDc}}
\newcommand{\pgds}[1]{\pgd-$#1$}
\newcommand{\pgdd}{\pgd-d}
\newcommand{\emcs}[1]{\emc-$#1$}
\newcommand{\emcd}{\emc-d}
\newcommand{\YL}{\textsc{YL}}
\newcommand{\HK}{\textsc{HK}}
\newcommand{\PPR}{\textsc{PPR}}
\title{Local Network Community Detection with Continuous Optimization of Conductance and Weighted Kernel K-Means}
\author{%
  \name Twan van Laarhoven %
  \email tvanlaarhoven@cs.ru.nl
    \AND
  \name Elena Marchiori %
  \email elenam@cs.ru.nl \\
  \addr Institute for Computing and Information Sciences\\ Radboud University Nijmegen\\Postbus 9010\\6500 GL Nijmegen, The Netherlands %
}
\begin{document}

\maketitle

\begin{abstract}
Local network community detection is the task of finding a single community of nodes concentrated around few given seed nodes in a localized way. Conductance is a popular objective function used in many algorithms for local community detection.
This paper studies a continuous relaxation of conductance. We show that continuous optimization of this objective still leads to discrete communities.
We investigate the relation of conductance with weighted kernel k-means for a single community, which leads to the introduction of a new objective function, $\sigma$-conductance. Conductance is obtained by setting $\sigma$ to $0$.
Two algorithms, {\emc} and {\pgd}, are proposed to locally optimize $\sigma$-conductance and  automatically tune the parameter $\sigma$. They are  based on expectation maximization  and projected gradient descent, respectively. 
We prove locality and give performance guarantees for {\emc} and {\pgd} for a class of dense and well separated communities centered around the seeds.   
Experiments are conducted on networks with ground-truth communities, comparing to state-of-the-art graph diffusion algorithms  for conductance optimization.  
On large graphs,  results indicate that {\emc} and {\pgd}  stay localized and produce communities most similar to the ground, while graph diffusion  algorithms  generate  large communities of lower quality.\footnote{Source code of the algorithms used in the paper is available at 
\url{http://cs.ru.nl/~tvanlaarhoven/conductance2016}.}

\end{abstract}

\begin{keywords}
  community detection, stochastic block model, conductance
\end{keywords}

\section{Introduction}

Imagine that you are trying to find a community of nodes in a network around a given set of nodes. A simple way to approach this problem is to consider this set as seed nodes, and then keep adding nodes in a local neighborhood of the seeds as long as this makes the community better in some sense. In contrast to global clustering,  where the overall community structure of a network has to be found, local community detection aims to find only one community around the given seeds by relying on local computations involving only nodes relatively close to the seed. 
Local community detection by seed expansion is especially beneficial in large networks, and is commonly used in real-life large scale network analysis \citep{gargi2011large, leskovec2010empirical,wu2012learning}. 

Several algorithms for local community detection operate by seed expansion. These methods have different expansion strategies, but what they have in common is their use of conductance as the objective to be optimized. 
Intuitively, conductance measures how strongly a set of nodes is connected to the rest of the graph; sets of nodes that are isolated from the graph have low conductance and make good communities.

The problem of finding a set of minimum conductance in a graph is computationally intractable \citep{Chawla2005,vsima2006np}. 
As a consequence,  many heuristic and approximation algorithms for local community detection have been introduced (see references in the related work section).
In particular, effective algorithms for this task are based on the local graph diffusion method. A  graph diffusion vector $\vec{f}$ is an infinite series  $\vec{f} = \sum_{i=0}^{\infty} \alpha_i \vec{P}^i \vec{s}$, with diffusion coefficients  $\sum_{i=0}^{\infty} \alpha_i  =1$, seed nodes $\vec{s}$, and random walk transition matrix $\vec{P}$. Types of graph diffusion, such as personalized Page Rank \citep{Andersen2006} and Heat Kernel \citep{Chung2007}, are determined by the choice of the diffusion coefficients. 
In the diffusion method an approximation of $\vec{f}$ is computed. After dividing each vector component by the degree of the corresponding node, the nodes are  sorted   in descending order by their values in this vector. Next, the conductance of each prefix of the sorted list is computed and either the set of smallest conductance is selected, e.g. in \citep{Andersen2006} or a local optima of conductance along the prefix length dimension \citep{Yang2012} is considered. 

These algorithms optimize conductance along a single dimension, representing the order in which nodes are added by the algorithm.  
However this ordering is mainly related to the seed, and not directly to the objective that is being optimized. 
Algorithms for the direct optimization of conductance mainly operate in the discrete search space of communities, and locally optimize conductance by adding and/or removing one node. This amounts to fixing a specific neighborhood structure over communities where the neighbors of a community are only those communities which differ by the membership of a single node.  This is just one possible choice of community neighbor.
A natural way to avoid the problem of choosing a specific neighborhood structure is to use continuous rather than discrete optimization.
To do this, we need a continuous relaxation of conductance, extending the notion of communities to allow for fractional membership.
This paper investigates such a continuous relaxation, which leads to the following findings. 

\subsubsection{On Local Optima}
Although local optima of a continuous relaxation of conductance might at first glance have nodes with fractional memberships, somewhat surprisingly all strict local optima are discrete.
This means that continuous optimization can directly be used to find communities without fractional memberships.

\subsubsection{Relation with Weighted Kernel K-Means}
We unravel the relation between conductance  and weighted kernel k-means objectives using the  framework by \citet{Dhillon:2007}.
Since the aim is to  find only one community, we consider a slight variation with one mean, that is, with $k=1$.  This relation leads to the introduction of a new objective function for local community detection, called $\sigma$-conductance, which is the sum of conductance and a regularization term whose influence is controlled by a parameter $\sigma$. Interestingly, the choice  of $\sigma$ has a direct effect on the number of local optima of the function, where larger values of $\sigma$ lead to more local optima. In particular, we prove that for $\sigma > 2$ all discrete communities are local optima. As a consequence, due to the seed expansion approach, local optimization of $\sigma$-conductance favors smaller communities for larger  values of $\sigma$.

\subsubsection{Algorithms}
Local optimization of $\sigma$-conductance can be easily performed using the projected gradient descent method. We develop an  algorithm based on this method, called {\pgd}.
Motivated by the relation between conductance and k-means clustering, we introduce an Expectation-Maximization (EM) algorithm for $\sigma$-conductance optimization, called {\emc}. We show that for $\sigma=0$, this algorithm is almost identical to projected gradient descent with an infinite step size in each iteration. We then propose a heuristic procedure for choosing $\sigma$ automatically in these algorithms.

\subsubsection{Retrieving Communities}
We  give a theoretic characterization of a class of communities, called dense and isolated communities, for which {\pgd}  and {\emc} perform optimally. For this class of communities the algorithms exactly recover a community from the seeds.
We investigate the relation between this class of communities and  the notion of $(\alpha, \beta)$-cluster proposed by \citep{MishraSST08}  for social networks analysis. And we show that, while all maximal cliques in a graph are $(\alpha, \beta)$-clusters, they are not necessarily dense and isolated communities. We give a simple condition on  the degree of the nodes  of a community which guarantees that a dense and isolated community satisfying such condition is also an $(\alpha, \beta)$-cluster.


\subsubsection{Experimental Performance}
We use publicly available artificial and real-life network data with labeled ground-truth communities to assess the performance of {\pgd} and {\emc}.
Results of the two methods are very similar, with {\pgd} performing slightly better, while {\emc} is slightly faster.  
These results are compared with those obtained by three state-of-the-art algorithms for conductance optimization based on the local graph diffusion: the popular Personalized Page Rank (\PPR) diffusion algorithm by \citet{Andersen2006}, a more recent variant by \cite{Yang2012} (here called \YL), and the Heat Kernel (HK) diffusion algorithm by \cite{Kloster2014}. On large networks {\pgd} and {\emc}  stay localized and produce communities which are more faithful to the ground truth than those generated by the considered graph diffusion algorithms. {\PPR} and {\HK} produce much larger communities with a low conductance, while the {\YL} strategy outputs very small communities with a higher conductance.




\subsection{Related Work}
The enormous growth of network data from diverse disciplines such as social and information science and biology has boosted research on network community detection  (see for instance the overviews  by  \citet{Schaeffer2007} and \citet{Fortunato2010}).  Here we confine ourself to literature we consider to be relevant to the present work, namely local community detection by seed expansion, and review related work on  conductance as objective function and its local optimization. We also briefly review research on other objectives functions, and on properties of communities and of seeds.




\subsubsection{Conductance and Its Local Optimization}
Conductance has been largely used for network community detection.  
For instance \citet{leskovec2008statistical} introduced the notion of network community profile plot to measure the quality of a `best' community as a function of community size in a network.  They used conductance to measure the quality of a community and analyze   a large number of communities of different size scales in real-world social and information networks.


Direct conductance optimization was shown to favor  communities which are quasi-cliques \citep{Kang2011} or communities of large size which include irrelevant subgraphs  \citep{Andersen2006,Whang2013}.  

Popular algorithms for local community detection employ the local graph diffusion method to find a community with small conductance. 

Starting from the seminal work by \citet{Spielman2004}  various  algorithms for local community detection by seed expansion based on this approach have been proposed \citep{andersen2006local,avron2015community, Chung2007,Kloster2014,Zhu2013}. The theoretical analysis in these works is largely based on a mixing result which shows that a cut with small conductance can be found by simulating a random walk starting from a single node for sufficiently many steps \citep{lovasz1990mixing}.  This result is used to prove that if the seed is near to a set with small conductance then the result of the  procedure is a community with a related conductance, which is returned in time proportional to the volume of the community (up to a logarithmic factor). 

\citet{mahoney2012local} performed local community detection by modifying the spectral program used in standard global spectral clustering. Specifically the authors incorporated a bias towards a target region of seed nodes in the form of a constraint to force the solution to be well connected with or to lie near the seeds.  The degree of connectedness was specified by setting  a so-called correlation parameter. The authors showed  that the optimal solution of the resulting constrained optimization problem is a generalization of  Personalized PageRank \citep{Andersen2006}.



\subsubsection{Other Objectives}
Conductance is not the only objective function used in local community detection algorithms. Various other objective functions have been considered in the literature. For instance, \citet{chen2009local} proposed to use the ratio of the average internal and external degree of nodes in a community as objective function. \citet{clauset2005finding} proposed a local variant of modularity. \citet{Wu:2015} modified the classical density objective, equal to the sum of edges in the community divided by its size, by replacing the denominator with the sum of weights of the community nodes, where  the weight of a node quantifies its proximity to the seeds and is computed using a graph diffusion method.

A comparative experimental analysis of objective functions with respect to their experimental and theoretical properties was performed e.g. in \citep{Yang2012} and \citep{Wu:2015}, respectively. 

\subsubsection{Properties of Communities}
Instead of focusing on objective functions and methods for local community detection, other researchers investigated properties of communities. \citet{MishraSST08} focused on interesting classes of communities and algorithms for their exact retrieval. They defined the so called $(\alpha, \beta)$-communities and developed algorithms capable of retrieving this type of communities starting from a seed connected to a large fraction of the members of the community.
\citet{zhu2013local}  considered the class of well-connected communities, which have a better internal connectivity  than  conductance. Internal connectivity of a community is defined as  the inverse of the mixing time for a random walk on the subgraph induced by the community.  They showed that for well-connected communities, it is possible to provide an improved  performance guarantee, in terms of conductance of the output, for local community detection algorithms based on the diffusion method.
\citet{Gleich:2012}  investigated the utility of neighbors of the seed; in particular  they showed empirically that such neighbors form a `good' local community around the seed.  \citet{Yang2012} investigated properties  of ground truth communities in social, information and technological networks.  

\citet{Lancichinetti2011finding} addressed the problem of finding a significant local community from an initial group of nodes. They  proposed a method which locally optimizes the statistical significance of a community, defined with respect to a global null model, by iteratively adding external significant nodes and removing internal nodes that are not statistically relevant. The resulting community is not guaranteed to contain the nodes of the initial community. 

\subsubsection{Properties of Seeds}
Properties of seeds in relation to the  performance of algorithms were investigated by e.g. \citet{Kloumann2014}.  They considered different types of algorithms, in particular a greedy seed expansion algorithm which at each step adds the node that yields the most negative change in conductance \citep{Mislove:2010}.  \citet{Whang2013}  investigated various methods  for choosing the seeds for a PageRank based algorithm for community detection. \citet{chen2013detecting} introduced the notion of local degree central node,  whose degree is greater than or equal to the degree of its neighbor nodes. A new local community detection method is introduced based on the local degree central node. In this method, the local community is not discovered from the given starting node, but from the local degree central node that is associated with the given starting node. 
\subsection{Notation}

We start by introducing the notation used in the rest of this paper.
We denote by $\nodes$ the set of nodes in a network or graph $G$.
A community, also called a cluster, $\setc \subseteq \nodes$ will be a subset of nodes, and its complement $\complement{\setc} = \nodes \setminus \setc$ consists of all nodes not in $\setc$.
Note that we consider any subset of nodes to be a community, and the goal of community detection is to find a \emph{good} community. 

Let $A$ be the adjacency matrix of $G$, where $a_{ij}$ denotes the weight of an edge between nodes $i$ and $j$.
In unweighted graphs $a_{ij}$ is either 0 or 1, and in undirected graphs $a_{ij} = a_{ji}$.
In this paper we work only with unweighted undirected graphs.
We can generalize this notation to sets of nodes, and write $a_{xy} = \sum_{i \in x}\sum_{j \in y} a_{ij}$.
With this notation in hand we can write conductance as
\begin{equation*}
  \score(\setc) = \frac{a_{\setc\complement{\setc}}}{a_{\setc\nodes}} = 1 - \frac{a_{\setc\setc}}{a_{\setc\nodes}}
  .
\end{equation*}
A common alternative definition is
\begin{equation*}
  \score_\text{alt}(\setc) = \frac{a_{\setc\complement{\setc}}}{\min(a_{\setc\nodes},a_{\complement{\setc}\nodes})},
\end{equation*}
which considers the community to be the smallest of $\setc$ and $\complement{\setc}$.
%
For instance
\citet{Kloster2014} and \citet{Andersen2006} use this alternative definition, while \citet{Yang2012} use $\score$.

Note that $\phi$ has a trivial optimum when all nodes belong to the community, while $\phi_\text{alt}$ will usually have a global optimum with roughly half of the nodes belonging to the community. Neither of these optima are desirable for finding a single small community.


With a set $\set{x}$ we associate an indicator vector $\indicator{\set{x}}$ of length $|V|$, such that $\indicator{\set{x}}_i = 1$ if $i \in \set{x}$ and $\indicator{\set{x}}_i=0$ otherwise. We will usually call this vector $\vec{x}$.

\section{Continuous Relaxation of Conductance}

If we want to talk about directly optimizing conductance, then we need to define what (local) optima are.
The notion of local optima depends on the topology of the input space, that is to say, on what communities we consider to be neighbors of other communities. We could, for instance, define the neighbors of a community to be all communities that can be created by adding or removing a single node. But this is an arbitrary choice, and we could equally well define the neighbors to be all communities reached by adding or removing up to two nodes.
An alternative is to move to the continuous world, where we can use our knowledge of calculus to give us a notion of local optima.

To turn community finding into a continuous problem, instead of a set $\setc$ we need to see the community as a vector $\vec{c}$ of real numbers between $0$ and $1$, where $c_i$ denotes the degree to which node $i$ is a member of the community. Given a discrete community $\setc$, we have $\vecc = \indicator{\setc}$, but the inverse is not always possible, so the vectorial setting is more general.

The edge weight between sets of nodes can be easily generalized to the edge weight of membership vectors,
\[
  a_{\vec{x}\vec{y}} = \vec{x}\tr A \vec{y} = \sum_{i \in \nodes}\sum_{j \in \nodes} x_{i} a_{ij} y_{j}.
\]
Now we can reinterpret the previous definition of conductance as a function of real vectors, which we could expand as
\begin{equation*}
  \score(\vecc) = 1 - \frac{\sum_{i,j \in \nodes} c_i a_{ij} c_j}{\sum_{i,j \in \nodes} c_i a_{ij}}.
\end{equation*}
With this definition we can apply the vast literature on constrained optimization of differentiable functions.
In particular, we can look for local optima of the conductance, subject to the constraint that $0 \le c_i \le 1$.
These local optima will satisfy the Karush-Kuhn-Tucker conditions, which in this case amounts to, for all $i \in \nodes$,
\begin{align*}
  &0 \le c_i \le 1\\
  &\nabla\score(\vecc)_i \ge 0  \quad\text{ if }c_i = 0\\
  &\nabla\score(\vecc)_i = 0    \quad\text{ if }0 < c_i < 1,\\
  &\nabla\score(\vecc)_i \le 0  \quad\text{ if }c_i = 1.
\end{align*}

To use the above optimization problem for finding communities from seeds, we add one additional constraint.
Given a set $\setseed$ of seeds we require that $c_i \ge s_i$; in other words, that the seed nodes are members of the community.
This is the only way in which the seeds are used, and the only way in which we can use the seeds without making extra assumptions.

\subsection{A Look at the Local Optima}
\label{sec:optima-discrete}
By allowing community memberships that are real numbers, uncountably many more communities are possible. One might expect that it is overwhelmingly likely that optima of the continuous relaxation of conductance are communities with fractional memberships.
But this turns out not to be the case. In fact, the strict local optima will all represent discrete communities.

To see why this is the case, consider
the objective in terms of the membership coefficient $c_i$ for some node $i$. This takes the form of a quadratic rational function,
\newcommand{\coeff}{\alpha}
\[
  \score(c_i) = \frac{\coeff_1 + \coeff_2 c_i + \coeff_3 c_i^2}{\coeff_4 + \coeff_5 c_i}.
\]
The coefficients in the denominator are positive, which means that the denominator is also positive for $c_i>0$.
At an interior local minimum we must have $\score'(c_i)=0$, which implies that $\score''(c_i)=2\coeff_3/(\coeff_4+\coeff_5 c_i)^3$.
But $\coeff_3 \le 0$, since it comes from the $c_i a_{ii} c_i$ term in the numerator of the conductance, so $\score''(c_i)\le 0$, and hence there are only local maxima or saddle points, not strict local minima.


It is still possible for there to be plateaus in the objective functions, where $\score(\vec{c})$ is optimal regardless of the value of $c_i$ for a certain node $i$.


\subsection{The Relation to Weighted Kernel K-Means Clustering}


Another view on conductance is by the connection to weighted kernel $k$-means clustering.
The connection between weighted kernel $k$-means and objectives for graph partitioning has been thoroughly investigated in \cite{Dhillon:2007}.
Here we extend that connection to the single cluster case.

Start with weighted $k$-means clustering, which, given a dataset $\{x_i\}_{i=1}^N$ and weights $\{w_i\}$, minimizes the following objective
\begin{align*}
  \sum_{i=1}^N \sum_{j=1}^k w_i c_{ij} \| x_i - \mu_j \|_2^2
\end{align*}
with respect to $\mu_j$ and $c_{ij}$, where $c_{ij}$ indicates if point $i$ belongs to cluster $j$,
subject to the constraint that exactly one $c_{ij}$ is 1 for every $i$.

Since our goal is to find a single cluster, a first guess would be to take $k=2$, and to try to separate a foreground cluster from the background.
But when using $2$-means, there is no distinction between foreground and background, and so solutions will naturally have two clusters of roughly equal size.
Instead, we can consider a one-cluster variant that distinguishes between points in a cluster and background points, which we call $1$-mean clustering. This can be formulated as the minimization of
\begin{align*}
  \sum_i w_i \bigl(c_i \| x_i - \mu \|_2^2 + (1-c_i) \lambda_i\bigr)
\end{align*}
with respect to a single $\mu$ and cluster membership indicators $c_i$ (between 0 and 1). Here $\lambda_i$ is a cost for node $i$ being a member of the background.

We allow different $\lambda_i$ for different nodes, as there is no reason to demand a single value. The condition for a node $i$ to be part of the community is $\| x_i - \mu \|_2^2 < \lambda_i$. So different values for $\lambda_i$ might be useful for two reasons. The first would be to allow incorporating prior knowledge, the second reason would be if the scale (of the clusters) is different, that is, nodes (in different clusters) have different distances from the mean. By adding a diagonal matrix to the kernel, the squared distance from all points to all other points is increased by that same amount. It makes sense to compensate for this in the condition for community membership. And since the diagonal terms we add to the kernel vary per node, the amount that these nodes move away from other points also varies, which is why we use different $\lambda_i$ per node.


The minimizer for $\mu$ is the centroid of the points inside the cluster,
\begin{align*}
  \mu = \frac{\sum_{i} w_i c_{i} x_i}{\sum_{i} w_i c_{i}};
\end{align*}
while the minimizer for $c_i$ is 1 if and only if $\|x_i - \mu \|^2 < \lambda_i$, and $0$ otherwise.


The $k$-means and $1$-mean objectives can be kernelized by writing distances in terms of inner products, and using a kernel $K(i,j) = \inner{x_i}{x_j}$.
The cluster mean is then a linear combination of points, $\mu = \sum_i \mu_{i} x_i$, giving
\begin{align*}
  \| x_i - \mu \|_2^2 = K(i,i) - 2\sum_{j} \mu_{j} K(i,j) + \sum_{j,k} \mu_{j} K(j,k) \mu_{k}.
\end{align*}
By filling in the optimal $\mu$ given above, the $1$-mean objective then becomes
\begin{align*}
  \score_{W,K,\lambda}(\vecc) =
  & \sum_i w_i c_i (K(i,i) - \lambda_i)
    + \sum_i w_i \lambda_i
    \\&- \frac{\sum_{i,j} w_i c_i w_j c_j K(i,j)}{\sum_{i} w_i c_i}.
\end{align*}
The second term is constant, so we can drop it for the purposes of optimization.
%

We pick $\lambda_i = K(i,i)$. With this choice, the condition for a node $i$ to be a member of the community is $\|x_i - \mu\|_2^2 < \|x_i - 0\|_2^2$. This can be seen as a 2-means cluster assignment where the background cluster has the origin as the fixed mean. With this choice the first term also drops out.

%

\label{sec:sigma}

By converting the graph into a kernel with
\begin{align*}
  K = W^{-1} A W^{-1},
\end{align*}
where $W$ is a diagonal matrix with the weights $w_i$ on the diagonal,
we can obtain objectives like conductance and association ratio.
However this $K$ is not a legal kernel, because a kernel has to be positive definite.
Without a positive definite kernel the distances $\|x_i - \mu\|$ from the original optimization problem can become negative.
To make the kernel positive definite, we follow the same route as \citeauthor{Dhillon:2007}, and add a diagonal matrix, obtaining 
\begin{align*}
  K = \sigma W^{-1} + W^{-1} A W^{-1}.
\end{align*}


Since we are interested in conductance, we take as weights $w_i = a_{i\nodes}$, the degree of node $i$,
and we take $\lambda_i = K(i,i)$. 
This results (up to an additive constant) in the following objective which we call $\sigma$-conductance,
\begin{align*}
  \score_{\sigma}(\vecc) = 
    1
    - \frac{\sum_{i,j} c_i c_j a_{ij}}{\sum_{i} c_i a_{i\nodes}}
    - \sigma\frac{\sum_{i} c_i^2 a_{i\nodes}}{\sum_{i} c_i a_{i\nodes}}.
\end{align*}

Observe that if $\vecc$ is a discrete community, then $c_i^2=c_i$, and the last term is constant. In that case optimization of this objective is exactly equivalent to optimizing conductance.

For the purposes of continuous optimization however, increasing the $\sigma$ parameter has the effect of increasing the objective value of non-discrete communities. So different communities become more separated, and in the extreme case, every discrete community becomes a local optimum.

\begin{theorem}
  When $\sigma > 2$, all discrete communities $\vecc$ are local minima of $\score_\sigma(\vecc)$ constrained to $0 \le c_i \le 1$.
\end{theorem}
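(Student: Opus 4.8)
The plan is to show that a discrete community $\vecc=\indicator{\setc}$ lies at the bottom of a first-order ``valley'': at every coordinate the gradient $\nabla\score_\sigma(\vecc)$ points strictly in the direction that the box constraint forbids, so any feasible move away from $\vecc$ strictly increases $\score_\sigma$ to first order, and a routine Taylor estimate then upgrades this to genuine local minimality.

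First I would use that $\vecc$ is discrete, so $c_i^2=c_i$ and hence $\sum_i c_i^2 a_{i\nodes}=\sum_i c_i a_{i\nodes}=:D=a_{\setc\nodes}$, which is strictly positive in the only case of interest (a nonempty community of positive volume; degree-zero nodes do not occur in $\score_\sigma$ at all and merely create plateaus, which are still local minima, so they can be set aside). Thus $\score_\sigma$ is a ratio of polynomials with nonvanishing denominator near $\vecc$, hence smooth there, and may be differentiated. Writing $N:=a_{\setc\setc}$, a short computation of the partial derivatives at $\vecc$ gives
\[
  \nabla\score_\sigma(\vecc)_i = \frac{(N+\sigma D)\,a_{i\nodes}-2a_{\setc i}D}{D^2}\ \ (i\notin\setc),
  \qquad
  \nabla\score_\sigma(\vecc)_i = \frac{(N-\sigma D)\,a_{i\nodes}-2a_{\setc i}D}{D^2}\ \ (i\in\setc).
\]
The two elementary inequalities driving the argument are $a_{\setc i}\le a_{i\nodes}$ (the edges from $i$ into $\setc$ cannot exceed the degree of $i$) and $N=a_{\setc\setc}\le a_{\setc\nodes}=D$. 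For $i\notin\setc$ the numerator is at least $a_{i\nodes}\bigl(N+(\sigma-2)D\bigr)>0$, so $\nabla\score_\sigma(\vecc)_i>0$; for $i\in\setc$ the numerator is at most $(N-\sigma D)a_{i\nodes}\le(1-\sigma)D\,a_{i\nodes}<0$, so $\nabla\score_\sigma(\vecc)_i<0$. In particular the KKT conditions of the constrained problem hold strictly at $\vecc$, and $\sigma>2$ is exactly what the first estimate requires (the second needs only $\sigma>1$).

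To finish, note that a perturbation $\vec t$ keeps $\vecc+\vec t$ in $[0,1]^\nodes$ for small step sizes precisely when $t_i\ge0$ for $i\notin\setc$ and $t_i\le0$ for $i\in\setc$; for every such $\vec t$ the sign pattern above yields $\langle\nabla\score_\sigma(\vecc),\vec t\rangle\ge0$ term by term, with equality only at $\vec t=\vec0$. Since this feasible cone meets the unit sphere in a compact set on which the linear functional $\vec t\mapsto\langle\nabla\score_\sigma(\vecc),\vec t\rangle$ is strictly positive, it is bounded below there by some $\delta>0$, hence $\langle\nabla\score_\sigma(\vecc),\vec t\rangle\ge\delta\|\vec t\|$ for all $\vec t$ in the cone. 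Combining with Taylor's theorem, $\score_\sigma(\vecc+\vec t)=\score_\sigma(\vecc)+\langle\nabla\score_\sigma(\vecc),\vec t\rangle+R(\vec t)$ with $|R(\vec t)|\le C\|\vec t\|^2$ near $\vecc$, I get $\score_\sigma(\vecc+\vec t)\ge\score_\sigma(\vecc)+\|\vec t\|(\delta-C\|\vec t\|)\ge\score_\sigma(\vecc)$ once $\|\vec t\|$ is small, which is the claim. The pitfall to avoid is arguing by convexity: the quadratic form $\vec t\tr A\vec t$ inside $\score_\sigma$ is not positive semidefinite, so bare stationarity would not suffice; the real content is that at a \emph{discrete} community the gradient is strictly inward-pointing coordinatewise, and the only genuine computation is the two gradient-sign estimates.
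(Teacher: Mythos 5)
Your proof is correct, and its core is the same as the paper's: you compute $\nabla\score_\sigma$ at a discrete $\vecc$, use $a_{\setc i}\le a_{i\nodes}$ and $a_{\setc\setc}\le a_{\setc\nodes}$ to show the gradient is strictly positive on coordinates with $c_i=0$ (needing $\sigma>2$) and strictly negative on coordinates with $c_i=1$ (needing only $\sigma>1$) --- exactly the paper's two bounds. Where you diverge is the finishing step. The paper converts this first-order sign information into local minimality by observing that the KKT conditions hold, that the gradient signs rule out a local maximum, and then appealing to the coordinatewise concavity of $\score_\sigma$ (from the proof of Theorem~\ref{thm:optima-discrete}) to exclude saddle points. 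You instead give a self-contained argument: the feasible directions at a discrete vertex form a cone on which $\langle\nabla\score_\sigma(\vecc),\vec t\rangle$ is strictly positive away from $\vec 0$, compactness of the cone's intersection with the unit sphere gives a linear lower bound $\delta\|\vec t\|$, and a Taylor remainder estimate absorbs the second-order term for small $\|\vec t\|$. Your version is arguably more rigorous and does not lean on the concavity discussion elsewhere in the paper; it also correctly flags that convexity is unavailable here. The only loose ends are the degenerate cases you set aside (the empty community, where the objective is undefined, and degree-zero nodes, along whose coordinates the gradient vanishes so the $\delta>0$ bound must be applied after projecting those coordinates out); the paper glosses over these as well, but a fully careful statement would handle them explicitly.
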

\begin{proof}
  The gradient of $\score_\sigma$ is
  \[
    \nabla\score_\sigma(\vecc)_i =
      a_{i\nodes}\frac{a_{\vecc\vecc}}{a_{\vecc\nodes}^2} - 2 \frac{a_{i\vecc}}{a_{\vecc\nodes}} +
      \sigma\Bigl(
       a_{i\nodes}\frac{\sum_j c_j^2 a_{j\nodes}}{a_{\vecc\nodes}^2} - 2 c_i \frac{a_{i\nodes}}{a_{\vecc\nodes}}
      \Bigr).
  \]
  When $\vecc$ is discrete, then $\sum_j c_j^2 a_{j\nodes} = a_{\vecc\nodes}$, so the gradient simplifies to
  \[
    \nabla\score_\sigma(\vecc)_i =
    \frac{a_{i\nodes}}{a_{\vecc\nodes}}
    \Bigl(
     \frac{a_{\vecc\vecc}}{a_{\vecc\nodes}} + (1 - 2c_i)\sigma - 2 \frac{a_{i\vecc}}{a_{i\nodes}}
    \Bigr)
      .
  \]
  Because $a_{i\vecc} \le a_{i\nodes}$ and $a_{\vecc\vecc} \le a_{\vecc\nodes}$ we can bound this by
  \[
    \frac{a_{i\nodes}}{a_{\vecc\nodes}}
    \bigl((1 - 2c_i)\sigma - 2\bigr)
    \le
    \nabla\score_\sigma(\vecc)_i
    \le
    \frac{a_{i\nodes}}{a_{\vecc\nodes}} \bigl((1 - 2 c_i)\sigma + 1\bigr).
  \]
  So if $c_i = 0$, we get that $\nabla\score_\sigma(\vecc)_i > 0$ when $\sigma > 2$.
  And if $c_i = 1$, we get that $\nabla\score_\sigma(\vecc)_i < 0$ when $\sigma > 1$.
  
  This means that when $\sigma > 2$ all discrete communities satisfy the KKT conditions,
  and from the sign of the gradient we can see that they are not local maxima.
  Furthermore, $\score_\sigma(c)$ is a concave function, so it has no saddle points (see the proof of \thmref{thm:optima-discrete}).
  This means that all discrete communities are local minima of $\score_\sigma$.
\end{proof}

Conversely, the result from \secref{sec:optima-discrete} generalizes to $\sigma$-conductance,
\begin{theorem}
  When $\sigma \ge 0$, all strict local minima $\vecc$ of $\score_\sigma(\vecc)$ constrained to $0 \le c_i \le 1$ are discrete.
  Furthermore, if $\sigma > 0$ then all local minima are discrete.
  \label{thm:optima-discrete}
\end{theorem}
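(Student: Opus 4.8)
The plan is to rerun the one-coordinate argument of Section~\ref{sec:optima-discrete}, now for $\score_\sigma$, refined so as to keep track of strictness. Suppose $\vecc$ is a local minimum of $\score_\sigma$ over the box $\{0 \le c_i \le 1\}$ and, for contradiction, that some coordinate is fractional, say $0 < c_i < 1$ (nodes of degree $0$ may be discarded, since $c_i$ does not occur in $\score_\sigma$ for such a node). Freezing every coordinate but the $i$-th, $\score_\sigma$ restricts to a univariate function $g(s) = \score_\sigma(\dots, s, \dots)$ which, exactly as in Section~\ref{sec:optima-discrete}, is a quadratic over an affine function, $g(s) = (\alpha_3 s^2 + \alpha_2 s + \alpha_1)/(\alpha_5 s + \alpha_4)$. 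The denominator is $a_{\vecc\nodes}$ with $c_i$ replaced by $s$, so $\alpha_5 = a_{i\nodes} > 0$, $\alpha_4 = \sum_{j \ne i} c_j a_{j\nodes} \ge 0$, and $\alpha_5 c_i + \alpha_4 = a_{\vecc\nodes} \ge a_{i\nodes} c_i > 0$; hence $g$ is smooth on a neighbourhood of $c_i$. The $s^2$ term of the numerator collects only the $-c_i a_{ii} c_i$ part of $-a_{\vecc\vecc}$ and the $-\sigma c_i^2 a_{i\nodes}$ part of the regularizer, so $\alpha_3 = -(a_{ii} + \sigma a_{i\nodes}) \le 0$, with $\alpha_3 = 0$ only when $a_{ii} = 0$ and $\sigma = 0$.

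Since $\vecc$ is a local minimum over the box, $c_i$ is an interior local minimum of $g$ on $[0,1]$, so $g'(c_i) = 0$ and $g''(c_i) \ge 0$. The key fact---this is the computation underlying Section~\ref{sec:optima-discrete}---is that at a critical point $s_0$ of a quadratic-over-affine function with $\alpha_5 s_0 + \alpha_4 > 0$ one has $g''(s_0) = 2\alpha_3/(\alpha_5 s_0 + \alpha_4)$, by the quotient rule and the vanishing of the second derivative of the affine denominator; in particular $g''(s_0)$ has the sign of $\alpha_3$. Applying this at $s_0 = c_i$ gives $0 \le g''(c_i) = 2\alpha_3/(\alpha_5 c_i + \alpha_4) \le 0$, forcing $\alpha_3 = 0$, i.e. $a_{ii} = 0$ and $\sigma = 0$.

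This already proves the ``furthermore'' claim: if $\sigma > 0$ then $\alpha_3 = -\sigma a_{i\nodes} < 0$, contradicting $\alpha_3 = 0$, so no coordinate of $\vecc$ is fractional and $\vecc$ is discrete. For the first claim (any $\sigma \ge 0$), a fractional coordinate can occur only if $\sigma = 0$ and $a_{ii} = 0$; then the numerator of $g$ is affine, $g$ is a M\"obius function, and $g'(s) = (\alpha_2 \alpha_4 - \alpha_1 \alpha_5)/(\alpha_5 s + \alpha_4)^2$ has a constant numerator. Since $g'(c_i) = 0$, that constant vanishes, so $g' \equiv 0$ near $c_i$ and $g$ is constant there; thus $\score_\sigma$ is unchanged by small perturbations of $c_i$ alone, and $\vecc$ is not a \emph{strict} local minimum. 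Contrapositively, every strict local minimum of $\score_\sigma$ is discrete. (The lone degenerate subcase, $\vecc$ supported on the single node $i$ so that $\alpha_4 = 0$, is harmless: $\alpha_5 c_i + \alpha_4 = a_{i\nodes} c_i > 0$ still holds.)

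It is natural to also record here the concavity fact invoked in the proof of the $\sigma > 2$ theorem. For $\sigma \ge 1$ one has $a_{\vecc\vecc} + \sigma \sum_j c_j^2 a_{j\nodes} = \sum_{i,j} a_{ij}\bigl(c_i c_j + \tfrac{\sigma}{2}(c_i^2 + c_j^2)\bigr) \ge \tfrac12 \sum_{i,j} a_{ij}(c_i + c_j)^2 \ge 0$, so $A + \sigma W \succeq 0$ (here $W = \operatorname{diag}(a_{i\nodes})$); and for any positive semidefinite $M$ the map $\vecc \mapsto (\vecc\tr M \vecc)/(\vecc\tr \vec{d})$ (where $d_i = a_{i\nodes}$) is convex on $\{\vecc\tr \vec{d} > 0\}$, being (with $M = \sum_k \vec{m}_k \vec{m}_k\tr$) a sum of perspective functions $(\vec{m}_k\tr \vecc)^2 / (\vec{d}\tr \vecc)$. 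Hence $\score_\sigma = 1 - (\vecc\tr (A + \sigma W)\vecc)/(\vecc\tr \vec{d})$ is concave on $\{a_{\vecc\nodes} > 0\}$ whenever $\sigma \ge 1$, and so has no saddle points there. The one slightly delicate point in the argument is the closing step for $\sigma = 0$---confirming that a univariate plateau genuinely rules out strictness---and I expect it to cause no real trouble; everything else is the routine single-coordinate expansion of $\score_\sigma$ and the one-line differentiation of $g$ at a critical point.
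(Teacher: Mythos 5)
Your proof is correct and follows essentially the same single-coordinate quadratic-over-affine argument that the paper itself invokes (by reference to Section~\ref{sec:optima-discrete}), only written out in full: you correctly identify $\alpha_3 = -(a_{ii}+\sigma a_{i\nodes})$, you make the $\sigma=0$ plateau case explicit to rule out strictness, and your critical-point formula $g''(c_i)=2\alpha_3/(\alpha_5 c_i+\alpha_4)$ is the right one (the paper's displayed cube on the denominator is a typo, though the sign conclusion is unaffected). The closing aside on concavity for $\sigma\ge 1$ is a correct but unnecessary addition for this theorem.
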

\begin{proof}
  By the argument from \secref{sec:optima-discrete}.
  When $\sigma > 0$ it is always the case that $\coeff_3 < 0$, so there are no saddle points or plateaus, and all local minima are discrete.
\end{proof}

As an example application of $\sigma$-conductance, consider the network in \figref{fig:clique-with-tail}.
In this network, the clique is not a local optimum of regular conductance. This is because the gradient for the adjacent nodes with degree 2 is always negative, regardless of the conductance of the community. However, for $\sigma$-conductance this gradient becomes positive when $\sigma > \score_\sigma(\vecc)$, in this case when $\sigma > 0.131$.
In other words, with higher $\sigma$, adjacent nodes with low degree are no longer considered part of otherwise tightly connected communities such as cliques.

\begin{figure}
  \def\circlescale{0.82}
  \newcommand{\circleclique}[5]{
    \begin{scope}[#3]
      \foreach \x in {1,...,#2} {
        \node (#1\x) [#4] at (\x*360/#2:1.2*\circlescale) {};
        \foreach \y in {1,...,#2} {
          \ifnum \x=\y \breakforeach \fi
          \ifnumequal{\x}{#5}{}{
            \draw[] (#1\x) -- (#1\y);
          }
        }
      }
    \end{scope}
  }
  \begin{center}
    \begin{tikzpicture}
      [node/.style={circle,draw}
      ,cnode/.style={circle,draw,fill=blue!70!black!30}
      ]
      \circleclique{a}{5}{shift={(0,0)},rotate=0}{cnode}{0}
      \node[node] (1) at (2,0.) {};
      \node[] (2) at (3,0.6) {};
      \node[] (2b) at (3,-0.6) {};
      \node[node] (l1) at (-2,0.7) {};
      \node[] (l2) at (-3,0.8) {};
      \node[node] (k1) at (-2,-0.7) {};
      \node[] (k2) at (-3,-0.8) {};
      \draw[] (a5) -- (1) -- (2);
      \draw[] (1) -- (2b);
      \draw[] (a2) -- (l1) -- (l2);
      \draw[] (a3) -- (k1) -- (k2);
    \end{tikzpicture}
  \end{center}
  \caption{A simple subnetwork consisting of a clique with tails connecting it to the rest of the network. The clique (shaded nodes) is not a local optimum of conductance, but it is a local optimum of $\sigma$-conductance when $\sigma > 0.131$.}
  \label{fig:clique-with-tail}
\end{figure}

\section{Algorithms}

We now introduce two simple algorithms for the local optimization of conductance and $\sigma$-conductance, analyze their computational complexity and provide an exact performance guarantee for a class of communities.
Then we look at a procedure for the automatic selection of a value for $\sigma$.

\subsection{Projected Gradient Descent}
\label{sec:pgd}

%

Perhaps the simplest possible method for constrained continuous optimization problems is projected gradient descent.
This is an iterative algorithm, where in each step the solution is moved in the direction of the negative gradient, and then this solution is projected so as to satisfy the constraints.

In our case, we start from an initial community containing only the seeds,
\begin{align*}
  \cit{0} = \vecs,
\end{align*}
where $\vecs=[S]$ is a sparse vector indicating the seed node(s).
Then in each subsequent iteration we get
\begin{align*}
  \cit{t+1} = \proj( \cit{t} - \stepit{t}\nabla\score(\cit{t}) ).
\end{align*}
This process is iterated until convergence.
The step size $\stepit{t}$ can be found with line search.
The gradient $\nabla\score$ is given by
\[
  \nabla\score(\vecc)_i = \frac{a_{i\nodes}a_{\vecc\vecc}}{a_{\vecc\nodes}^2} - 2 \frac{a_{ic}}{a_{\vecc\nodes}}.
\]
And the projection $\proj$ onto the set of valid communities is defined by
\[
  \proj(\vecc) = \argmin_{{\vecc' \text{, s.t. }0 \le c_i' \le 1, s_i \le c_i'}} \|\vecc - \vecc'\|^2_2,
\]
which simply amounts to
\[
  \proj(\vecc) = \max(\vecs,\min(1,\vecc)).
\]
This function clips values above 1 to 1, and values below $s_i$ to $s_i$. Since $s_i \ge 0$ this also enforces that $c_i \ge 0$.

The complete algorithm is given in \algref{alg:pgd}.
If a discrete community is desired, as a final step, we might threshold the vector $\vecc$. But as shown in \thmref{thm:optima-discrete} the found community is usually already discrete.

\begin{algorithm}
  \caption{Projected Gradient Descent conductance optimization (\pgd)}
  \textbf{Input:} A set $\setseed$ of seeds of seeds, a graph $G$, a constant $\sigma \ge 0$.
  \begin{algorithmic}[1]
    \State $\vecs \gets \indicator{\setseed}$
    \State $\cit{0} \gets \vecs$
    \State $t \gets 0$
    \Repeat
      \State $\stepit{t} \gets \text{LineSearch}(\cit{t})$
      \State $\cit{t+1} = \proj(\cit{t} - \stepit{t} \nabla\score_\sigma(\cit{t}) )$
      \State $t \gets t+1$
    \Until{$\cit{t-1} = \cit{t}$}
    \State $\setc \gets \{ i \in \nodes \mid \cit{t}_i \ge 1/2 \}$
  \end{algorithmic}
  \vspace*{1mm}\hrule\vspace*{1mm}
  \textbf{function} {LineSearch}($\vecc$)
  \begin{algorithmic}[1]
      \State $\step^* \gets 0$, \quad $\score^* \gets \score_\sigma(\vecc)$
      \State $\vec{g} \gets \nabla\score_\sigma(\vecc)$
      \State $\step \gets 1/\max(|\vec{g}|)$
      \Repeat
        \State $\vecc' \gets \proj(\vecc - \step \vec{g})$
        \If{$\score_\sigma(\vecc') < \score^*$}
          \State $\step^* \gets \step$, \quad $\score^* \gets \score_\sigma(\vecc')$
        \EndIf
        \State $\step \gets 2\step$
      \Until{$\vecc'_i \in \{0,1\}$ for all $i$ with $g_i \neq 0$}
      \State \textbf{return} $\step^*$
  \end{algorithmic}
  \label{alg:pgd}
\end{algorithm}

\subsection{Expectation-Maximization}
\label{sec:em}


The connection to $k$-means clustering suggests that it might be possible to optimize conductance using an Expectation-Maximization algorithm similar to Lloyd's algorithm for $k$-means clustering.
Intuitively, the algorithm would work as follows:
\begin{compactitem}
 \item \textbf{E step} assign each node $i$ to the community if and only if its squared distance to the mean is less than $\lambda_i$.
 \item \textbf{M step} set the community mean to the weighted centroid of all nodes in the community.
\end{compactitem}
These steps are alternated until convergence.
Since both these steps do not increase the objective value, the algorithm is guaranteed to converge.

If the community after some iterations is $\set{c}$, then, as in the previous section, we can fill in the optimal mean into the E step, to obtain that a node $i$ should be part of the community if
\[
 K(i,i)
 +  \frac{a_{\setc\setc}/a_{\setc\nodes} + \sigma}{a_{\setc\nodes}}
 - 2\frac{a_{i\setc}/a_{i\nodes} + \sigma c_i}{a_{\setc\nodes}}
 < \lambda_i.
\]
When $\lambda_i = K(i,i)$, this condition is equivalent to
\[
  \nabla\score_\sigma(\set{c}) < 0.
\]
This leads us to the EM community finding algorithm, \algref{alg:em}.
\begin{algorithm}
  \caption{EM conductance optimization (\emc)}
  \textbf{Input:} A set $S$ of seeds, a graph $G$, a constant $\sigma \ge 0$.
  \begin{algorithmic}[1]
    \State $\Cit{0} \gets S$
    \State $t \gets 0$
    \RepeatWhile
      \State $\Cit{t+1} = \{i \mid \nabla\score_\sigma(\Cit{t})_i < 0\} \cup S$
      \State $t \gets t+1$
    \EndRepeatWhile{$\Cit{t} < \Cit{t-1}$}
  \end{algorithmic}
  \label{alg:em}
\end{algorithm}

By taking $\sigma=0$ we get that nodes are assigned to the community exactly if the gradient $\nabla\score(\set{c})_i$ is negative. So, this EM algorithm is very similar to projected gradient descent with an infinite step size in each iteration. The only difference is for nodes with $\nabla\score(\set{c})_i=0$, which in the {\emc} algorithm are always assigned to the background, while in PGD their membership of the community is left unchanged compared to the previous iteration.

Of course, we have previously established that $\sigma=0$ does not lead to a valid kernel (this doesn't preclude us from still using the EM algorithm). In the case that $\sigma>0$ there is an extra barrier for adding nodes not currently in the community, and an extra barrier for removing nodes that are in the community. This is similar to the effect that increasing $\sigma$ has on the gradient of $\score_{\sigma}$.

\subsection{Computational Complexity}


Both methods require the computation of the gradient in each iteration.
This computation can be done efficiently. The only nodes for which the gradient of the conductance is negative are the neighbors of nodes in the current community, and the only nodes for which a positive gradient can have an effect are those in the community. So the gradient doesn't need to be computed for other nodes. For the other nodes the gradient depends on the number of edges to the community, and on the node's degree. Assuming that the node degree can be queried in constant time, the total time per iteration is proportional to the size of the one-step-neighborhood of the community, which is of the order of the volume of the community. If the node degrees are not known, then the complexity  increases to be proportional to the volume of the one-step-neighborhood of the community, though this is a one-time cost, not a per iteration cost.

As seen in \secref{sec:recover}, for dense and isolated communities, the number of iterations is bounded by the diameter of the community. In general we can not guarantee such a bound, but in practice the number of iterations is always on the order of the diameter of the recovered community.

For very large datasets, the computation of the gradient can still be expensive, even though it is a local operation.
Therefore, we restrict the search to a set of $1000$ nodes near the seed.
This set $N$ is formed by starting with the seed, and repeatedly adding all neighbors of nodes in $N$, until the set would contain more than 1000 nodes. In this last step we only add the nodes with the highest $a_{iN}/a_{iV}$ so that the final set contains exactly 1000 nodes.

\subsection{Choosing $\sigma$}
\label{sec:choosing-sigma}

In \secref{sec:sigma} we introduced the $\sigma$ parameter, and we have shown that larger values of $\sigma$ lead to more local optima. This leaves the question of choosing the value of $\sigma$.

One obvious choice is $\sigma=0$, which means that $\score_\sigma$ is exactly the classical conductance.

Another choice would be to pick the smallest $\sigma$ that leads to a positive definite kernel.
But this is a global property of the network, that is furthermore very expensive to compute.


Instead, we try several different values of $\sigma$ for each seed, and then pick the community with the highest density, that is, the community $\setc$ with the largest $a_{\setc\setc}/|\setc|^2$.

\subsection{Exactly Recoverable Communities}
\label{sec:recover}

We now take a brief look at which kinds of communities can be exactly recovered with gradient descent and expectation-maximization.
Suppose that we wish to recover a community $\setctrue$ from a seeds set $\setseed$, and assume that this community is connected.
Denote by $\dist(i)$ the shortest path distance from a node $i\in\setctrue$ to any seed node, in the subnetwork induced by $\setctrue$.

First of all, since both algorithms grow the community from the seeds, we need to look at subcommunities $\setc \subseteq \setctrue$ \emph{centered around the seeds}, by which we mean that $\dist(i) \le \dist(j)$ for all nodes $i\in \setc$ and $j \in \setctrue \setminus \setc$.

Secondly, we need the community to be sufficiently densely connected to be considered a community in the first place; but at the same time the community needs to be separated from the rest of the network.
Again, because the communities are grown, we require that this holds also for subcommunities that are grown from the seeds,
\begin{definition}
  A community $\setctrue$ is \emph{dense and isolated} with threshold $\sigma$ if
  for all subsets $\setc\subseteq \setctrue$ centered around the seeds $\setseed$:
  \begin{itemize}
    \item $2a_{i\setc}/a_{i\nodes} > a_{\setc\setc}/a_{\setc\nodes}-\sigma$ for all nodes $i \in \setc$, and
    \item $2a_{i\setc}/a_{i\nodes} \le a_{\setc\setc}/a_{\setc\nodes}-\sigma$ for all nodes $i \notin \setctrue$.
  \end{itemize}
\end{definition}
Some examples of communities that satisfy this property are cliques and quasi-cliques that are only connected to nodes of high degree.

Now denote by $\setdist{n}$ the set of nodes $i$ in $\setctrue$ with $\dist(i)\le n$. Clearly $\setdist{0} = \setseed$, and because the community is connected there is some $n^*$ such that $\setdist{n^*} = \setctrue$.

We first look at the expectation-maximization algorithm.
\begin{theorem}
  If $\setctrue$ is dense and isolated, then the iterates of the {\emc} algorithm satisfy $\Cit{t} = \setdist{t}$.
  \label{thm:recover-em}
\end{theorem}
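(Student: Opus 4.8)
The plan is to prove $\Cit{t} = \setdist{t}$ by induction on $t$. The base case holds by construction of the algorithm together with $\setdist{0} = \setseed$, so $\Cit{0} = \setseed = \setdist{0}$. For the inductive step I assume $\Cit{t} = \setdist{t}$ and aim to show $\Cit{t+1} = \setdist{t+1}$. The key structural observation is that $\setdist{t}$, being the set of all nodes of $\setctrue$ at distance at most $t$ from the seeds, is centered around the seeds, so both conditions in the definition of ``dense and isolated'' become available with $\setc = \setdist{t}$. Recalling that for a discrete community the gradient simplifies to $\nabla\score_\sigma(\setdist{t})_i = \frac{a_{i\nodes}}{a_{\setdist{t}\nodes}}\bigl(\frac{a_{\setdist{t}\setdist{t}}}{a_{\setdist{t}\nodes}} + (1-2c_i)\sigma - 2\frac{a_{i\setdist{t}}}{a_{i\nodes}}\bigr)$, with $c_i = 1$ exactly when $i\in\setdist{t}$, the membership test $\nabla\score_\sigma(\setdist{t})_i < 0$ that drives the {\emc} update becomes $2a_{i\setdist{t}}/a_{i\nodes} > a_{\setdist{t}\setdist{t}}/a_{\setdist{t}\nodes} - \sigma$ for $i\in\setdist{t}$, and $2a_{i\setdist{t}}/a_{i\nodes} > a_{\setdist{t}\setdist{t}}/a_{\setdist{t}\nodes} + \sigma$ for $i\notin\setdist{t}$.

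Next I would split on the location of a node $i$ and verify that precisely the nodes of $\setdist{t+1}$ pass the update. (i) If $i\in\setdist{t}$, the first bullet of the definition applied with $\setc=\setdist{t}$ is exactly the membership test above, so $i\in\Cit{t+1}$; hence $\setdist{t}\subseteq\Cit{t+1}$. (ii) If $i\notin\setctrue$, the second bullet gives $2a_{i\setdist{t}}/a_{i\nodes}\le a_{\setdist{t}\setdist{t}}/a_{\setdist{t}\nodes}-\sigma\le a_{\setdist{t}\setdist{t}}/a_{\setdist{t}\nodes}+\sigma$, so the gradient is nonnegative, and since $\setseed\subseteq\setctrue$ we have $i\notin\setseed$ as well, hence $i\notin\Cit{t+1}$. (iii) If $i\in\setctrue$ with $\dist(i)>t+1$, then $i$ has no neighbour in $\setdist{t}$ (otherwise $\dist(i)\le t+1$), so $a_{i\setdist{t}}=0$, the gradient is again nonnegative, and $i\notin\setseed$, so $i\notin\Cit{t+1}$. (iv) If $i\in\setctrue$ with $\dist(i)=t+1$, I must show $i\in\Cit{t+1}$, that is, $2a_{i\setdist{t}}/a_{i\nodes} > a_{\setdist{t}\setdist{t}}/a_{\setdist{t}\nodes}+\sigma$. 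Since cases (i)--(iv) exhaust all nodes, combining them yields $\Cit{t+1}=\setdist{t+1}$, which closes the induction (and in particular shows the iteration is stationary once $\setdist{t}=\setctrue$).

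Case (iv) is where the genuine work lies and is, I expect, the main obstacle. The approach is to use the richness of the family of centered subsets: apply the first bullet not to $\setc=\setdist{t}$ but to the centered superset $\setc=\setdist{t+1}$, for which the node $i$ with $\dist(i)=t+1$ is now an interior member, obtaining $2a_{i\setdist{t+1}}/a_{i\nodes} > a_{\setdist{t+1}\setdist{t+1}}/a_{\setdist{t+1}\nodes}-\sigma$, and then transfer this to a statement about $\setdist{t}$ via $a_{i\setdist{t+1}} = a_{i\setdist{t}} + a_{i(\setdist{t+1}\setminus\setdist{t})}$ together with how $a_{\setc\setc}/a_{\setc\nodes}$ changes between $\setdist{t}$ and $\setdist{t+1}$; an alternative is to apply the first bullet directly to $\setc=\setdist{t}\cup\{i\}$. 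The delicate step, and the only nonroutine one, is to reconcile the ``$-\sigma$'' slack that the dense-and-isolated condition grants a node that already lies in $\setc$ with the ``$+\sigma$'' barrier that {\emc} places in front of a node it is merely considering adding; the remaining bookkeeping uses only $a_{i\setc}\le a_{i\nodes}$, $a_{\setc\setc}\le a_{\setc\nodes}$, $\sigma\ge 0$, and $\setseed\subseteq\setctrue$.
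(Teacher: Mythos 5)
Your base case and your cases (i)--(iii) are correct and essentially match the paper's argument (the paper merges your (i) and (iv) into the single case $i\in\setdist{t+1}$), and your computation of the two membership thresholds --- $-\sigma$ for retaining a node of $\Cit{t}$, $+\sigma$ for adding one --- is right. But the proof is not complete: case (iv), showing that every node at distance exactly $t+1$ satisfies $2a_{i\setdist{t}}/a_{i\nodes} > a_{\setdist{t}\setdist{t}}/a_{\setdist{t}\nodes} + \sigma$ and hence is added, is precisely the step where the ``dense'' hypothesis must do its work, and you leave it open. Neither of your proposed routes closes it: applying the first bullet of the definition to the centered set $\setdist{t+1}$ (or to $\setdist{t}\cup\{i\}$) yields a lower bound on $2a_{i\setdist{t+1}}/a_{i\nodes}$ in terms of $a_{\setdist{t+1}\setdist{t+1}}/a_{\setdist{t+1}\nodes}$ with a $-\sigma$ slack, and since $a_{i\setdist{t+1}}$ may exceed $a_{i\setdist{t}}$ by an arbitrary amount (a node at distance $t+1$ needs only one edge into $\setdist{t}$ but may have many into $\setdist{t+1}\setminus\setdist{t}$), the elementary bounds $a_{i\setc}\le a_{i\nodes}$ and $a_{\setc\setc}\le a_{\setc\nodes}$ cannot convert this into the required bound on $a_{i\setdist{t}}$ relative to $a_{\setdist{t}\setdist{t}}/a_{\setdist{t}\nodes}$, let alone absorb the $2\sigma$ discrepancy between the slack you are granted and the barrier you must clear.

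The paper does not derive this inequality from centered supersets at all: it invokes denseness, with the set $\setdist{t}$, as directly furnishing $2a_{i\Cit{t}}/a_{i\nodes} > 1-\score_\sigma(\Cit{t}) = a_{\setdist{t}\setdist{t}}/a_{\setdist{t}\nodes}+\sigma$ for \emph{every} $i\in\setdist{t+1}$, i.e. including the nodes merely adjacent to $\setdist{t}$ inside $\setctrue$, and with the threshold $1-\score_\sigma$ rather than $1-\score$. You are right that the first bullet, read literally, only covers $i\in\setc$ with a $-\sigma$ slack, so your observation does point at an imprecision in how the definition is stated; but a proof of the theorem must either invoke the hypothesis in the stronger form the paper's proof uses or strengthen the definition accordingly, and your proposal does neither. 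As written, the induction is not closed, and the gap is not mere bookkeeping.
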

\begin{proof}
  The proof proceeds by induction.
  For $t=0$, the only nodes $i$ with $\dist(i)=0$ are the seeds, and $\Cit{0} = \setseed$ by definition.
  
  Now suppose that $\Cit{t} = \setdist{t}$.
  Then for any node $i$ there are three possibilities.
  \begin{itemize}
    \item $i \in \setdist{t+1}$; then because $\setctrue$ is dense and $\setdist{t}$ is centered around the seeds,
          $2a_{i\Cit{t}}/a_{i\nodes} > 1-\score_\sigma(\Cit{t})$.
          This implies that $\nabla\score_\sigma(\Cit{t})_i < 0$.
    \item $i \in \setctrue \setminus \setdist{t+1}$; then there are no edges from $\setdist{t}$ to $i$, since otherwise the shortest path distance from $i$ to a seed would be $t+1$. So $a_{i\cit{t}} = 0$, which implies that $\nabla\score_\sigma(\Cit{t})_i \ge 0$.
    \item $i \notin \setctrue$; then because $\setctrue$ is isolated,
          $2a_{i\Cit{t}}/a_{i\nodes} \le 1-\score_\sigma(\Cit{t})$, which implies that $\nabla\score_\sigma(\cit{t})_i \ge 0$.
  \end{itemize}
  hence $\nabla\score_\sigma(\Cit{t})_i < 0$ if $i \in \setdist{t+1}$, and $\nabla\score_\sigma(\Cit{t})_i \ge 0$ otherwise.
  This means that $\Cit{t+1} = \setdist{t+1}$.
\end{proof}

For the projected gradient descent algorithm from Section~\ref{sec:pgd} we have an analogous theorem,
\begin{theorem}
  If $\setctrue$ is dense and isolated, then the iterates of {\pgd} satisfy $\cit{t} = \indicator{\setdist{t}}$. 
  \label{thm:recover-pgd}
\end{theorem}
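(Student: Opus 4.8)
The plan is to prove the theorem by induction on $t$, mirroring the proof of \thmref{thm:recover-em} but with the extra bookkeeping forced on us by the projection and the line search. The base case is immediate, since $\cit{0}=\vecs=\indicator{\setseed}=\indicator{\setdist{0}}$ by construction. For the inductive step, assume $\cit{t}=\indicator{\setdist{t}}$. Because this iterate is a \emph{discrete} community, the gradient $\vec g:=\nabla\score_\sigma(\cit{t})$ is exactly the one studied in the proof of \thmref{thm:recover-em}: using that $\setctrue$ is dense and isolated and that $\setdist{t}$ is centred around the seeds, we get $g_i<0$ for every $i\in\setdist{t+1}$ (this covers $i\in\setdist{t}$, where $c_i=1$, and $i\in\setdist{t+1}\setminus\setdist{t}$, where $c_i=0$), and $g_i\ge 0$ for every $i\notin\setdist{t+1}$.

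Next I would analyse a single projected step $\vecc'(\step):=\proj(\cit{t}-\step\,\vec g)$ for an arbitrary $\step>0$, coordinate by coordinate. For $i\in\setdist{t}$: $\cit{t}_i=1$ and $g_i<0$, so $1-\step g_i>1$ and the projection clips it to $1$. For $i\notin\setdist{t+1}$: $\cit{t}_i=0$, $g_i\ge0$, and $i$ is not a seed (seeds lie in $\setdist{0}\subseteq\setdist{t+1}$), so $-\step g_i\le0$ is clipped to $0$. For $i\in\setdist{t+1}\setminus\setdist{t}$: $\cit{t}_i=0$ and $g_i<0$, so $\vecc'(\step)_i=\min(1,\step|g_i|)$. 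Hence every point examined inside the line search lies in the box $B:=\{\vecc:\indicator{\setdist{t}}\le\vecc\le\indicator{\setdist{t+1}}\}$; the curve $\step\mapsto\vecc'(\step)$ is piecewise linear, nondecreasing in every coordinate, and equals $\indicator{\setdist{t+1}}$ precisely once $\step\ge\bar\step:=\max_{i\in\setdist{t+1}\setminus\setdist{t}}1/|g_i|$; and since the line search doubles $\step$ each round, its termination test ``$\vecc'_i\in\{0,1\}$ for all $i$ with $g_i\neq0$'' fires exactly at the first examined step $\ge\bar\step$, where $\vecc'(\step)=\indicator{\setdist{t+1}}$. So the line search halts and $\indicator{\setdist{t+1}}$ is among the points it evaluated.

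The remaining, and crucial, step is to show that the step $\step^*$ actually returned (the examined step of least $\score_\sigma$) yields $\cit{t+1}=\indicator{\setdist{t+1}}$, rather than the trivial step $\step^*=0$ (giving $\cit{t+1}=\indicator{\setdist{t}}$) or an intermediate fractional $\vecc'(\step)$ with $\step<\bar\step$. I would argue that $\score_\sigma$ is minimised along the line-search curve at $\indicator{\setdist{t+1}}$, using: (i) $\score_\sigma$ is concave (by the argument in \secref{sec:optima-discrete}), so $F(\step):=\score_\sigma(\vecc'(\step))$ is concave on each linear piece of the curve, with $F'(0)=-\sum_{i\in\setdist{t+1}\setminus\setdist{t}}|g_i|^2<0$, so $F$ strictly decreases on the first piece; (ii) the dense condition applied to $\setdist{t+1}$ itself (a distance-ball around the seeds, hence centred around them) gives $\nabla\score_\sigma(\setdist{t+1})_i<0$ for all $i\in\setdist{t+1}$, so $\indicator{\setdist{t+1}}$ is a strict local minimum of $\score_\sigma$ constrained to $B$ (every feasible direction out of that vertex increases $\score_\sigma$ to first order); and (iii) when $\sigma>0$, \thmref{thm:optima-discrete} forbids non-discrete local minima, so no fractional point of the curve can tie $\score_\sigma(\indicator{\setdist{t+1}})$. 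Together these should pin down $\step^*>0$ and $\cit{t+1}=\indicator{\setdist{t+1}}$, closing the induction (and the boundary case $\setdist{t+1}=\setdist{t}$ is trivial, since then $\vecc'(\step)=\indicator{\setdist{t}}$ for all $\step$).

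I expect this third step to be the real obstacle. The line search is deliberately specified to keep the step of \emph{lowest objective} rather than simply the largest step that discretises the iterate, so one cannot just ``reuse the step from \thmref{thm:recover-em}'': one must genuinely rule out some fractional point of the projected-gradient ray undercutting $\score_\sigma(\indicator{\setdist{t+1}})$. Concavity of $\score_\sigma$ together with the dense and isolated structure should be enough, but controlling $F$ across the kinks of the piecewise-linear curve (where coordinates get clipped to $1$, so that $F'$ can jump upward), and the mild degeneracy when $\sigma=0$ permits plateaus in $\score_\sigma$, is where the calculation has to be done with care.
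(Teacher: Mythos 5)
Your induction set-up, the gradient-sign analysis inherited from \thmref{thm:recover-em}, the coordinate-by-coordinate description of the projected ray, and the identification of the threshold step $\bar{\step}$ at which the line search terminates all match the paper's proof. The divergence is exactly at the step you flag as the obstacle, and there the argument you sketch does not close. Point (iii) is a non sequitur: \thmref{thm:optima-discrete} says that fractional points are not \emph{local minima} of the constrained problem; it says nothing about their objective \emph{values}, so it cannot rule out a kink point $\kappa<\bar{\step}$ of the projected ray with $\score_\sigma(\vecc'(\kappa))\le\score_\sigma(\indicator{\setdist{t+1}})$. Points (i) and (ii) only control $F$ locally near $\step=0$ and near $\step=\bar{\step}$: piecewise concavity tells you the minimum of $F$ over $[0,\bar{\step}]$ sits at a kink or an endpoint, and strict local minimality of $\indicator{\setdist{t+1}}$ in the box $B$ tells you $F$ decreases into the right endpoint, but neither excludes an interior kink undercutting the endpoint --- and since the line search returns the examined step of \emph{least} objective, that is precisely the scenario you must exclude (as is the possibility $\step^*=0$).

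The ingredient you are missing is the one the paper uses: coordinate-wise concavity of $\score_\sigma$ (the quadratic-over-linear argument of \secref{sec:optima-discrete}) lets you round any candidate sub-threshold optimum on the ray to a \emph{discrete} community $\setc'$ with $\setdist{t}\subseteq\setc'\subseteq\setdist{t+1}$ without changing the objective, and with $\nabla\score_\sigma(\setc')_i=0$ in a coordinate $i$ left fractional by the sub-threshold step. Every such $\setc'$ is centered around the seeds, so the dense-and-isolated hypothesis --- which you invoke only at $\setdist{t+1}$ itself, but which holds for \emph{all} centered subcommunities --- forces $\nabla\score_\sigma(\setc')_i\neq 0$, a contradiction. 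In other words, the hypothesis must be applied at the intermediate discrete communities strictly between $\setdist{t}$ and $\setdist{t+1}$, not just at the two endpoints of the ray; without that, the kink-crossing and $\sigma=0$ issues you yourself mention remain genuinely open in your write-up.
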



\begin{proof}
  The proof proceeds by induction, and is analogous to the proof of \thmref{thm:recover-em}.
  For $t=0$, the only nodes $i$ with $\dist(i)=0$ are the seeds, and $\cit{0} = \vec{s} = \indicator{\setseed}$ by definition.
  
  Now suppose that $\cit{t} = \indicator{\setdist{t}}$. We have already shown that
  $\nabla\score_\sigma(\Cit{t})_i < 0$ if and only if $i \in \setdist{t+1}$.
  This means that after projecting onto the set of valid communities, only the membership of nodes in $\setdist{t+1}$ can increase. Since nodes in $\setdist{t}$ already have membership $1$, and nodes not in $\setdist{t+1}$ already have membership $0$, they are not affected.
  
  \newcommand{\stepmax}{\step_\text{max}}
  Let $\stepmax = \max_{i \in \setdist{t+1}} -1/\nabla\score_\sigma(\cit{t})_i$.
  Clearly if $\stepit{t} \ge \stepmax$, then $\citidx{t}_i - \stepit{t}\nabla\score_\sigma(\cit{t})_i>1$ for all nodes $i \in \setdist{t+1}$, and hence $\proj( \cit{t} - \stepit{t}\nabla\score_\sigma(\cit{t}) ) = \indicator{\setdist{t+1}}$.
  So to complete the proof, we only need to show that the optimal step size found with line search is indeed (at least) $\stepmax$.

  Suppose that $\stepit{t} < \stepmax$ leads to the optimal conductance.
  Then there is a node $i \in \setdist{t+1}$ with fractional membership, $0 < \citidx{t+1}_i < 1$.
  By repeated application of \thmref{thm:optima-discrete} we know that there is a discrete community $\setc'$ with $\score_\sigma(\setc') = \score_\sigma(\cit{t+1})$, and furthermore $\score_\sigma(\setc' \setminus \{i\}) = \score_\sigma(\setc' \cup \{i\})$. The latter can only be the case if $\nabla\score_\sigma(\setc')_i = 0$.
  Because the only nodes whose membership has changed compared to $\cit{t}$ are those in $\setdist{t+1} \setminus \setdist{t}$, it follows that $\setc'$ contains all nodes with distance at most $t$ to the seeds, as well as some nodes with distance $t+1$ to the seeds. This means that $\setc'$ is centered around the seeds, and so $\nabla\score_\sigma(\setc')_i > 0$. This is a contradiction, which means that $\stepit{t} \ge \stepmax$ must be the optimum.
\end{proof}

As a corollary, since $\setdist{n^*+1}=\setdist{n^*}$, both the {\emc} and the {\pgd} algorithm will halt, and exactly recover $\setctrue$.


The notion of dense and isolated community is weakly related to that of $(\alpha, \beta)$-cluster \citep{MishraSST08} (without the technical assumption that each node has a self-loop): $C$  is an $(\alpha, \beta)$-cluster, with  $0\leq \alpha < \beta \leq 1$ if $a_{iC}\geq  \beta |C|$ for $i$ in $C$,  $a_{iC}\leq  \alpha |C|$ for $i$  outside $C$. 

The definition of dense and isolated community depends on the degree of the nodes while that of  $(\alpha, \beta)$-cluster does not.
As a consequence, not all maximal cliques of a graph  are in general dense and isolated communities while they are $(\alpha, \beta)$-clusters. For instance, a maximal clique  linked to an external isolated node, that is, a node of degree $1$, is not dense and isolated.

In general one can easily show that if $C$ is dense and isolated and  $$ \min_{i \in C} a_{iV} > \max_{i\not\in C, a_{iC}>0} a_{iV} $$  then $C$ is an $(\alpha, \beta)$-cluster with
\[
  \beta =  \frac{1-\score(C)}{2|C|} \min_{i \in C} a_{iV}
\]
and
\[
  \alpha = \frac{1-\score(C)}{2|C|} \max_{i\not\in C, a_{iC}>0} a_{iV}.
\]

%

\begin{table*}[tb]
  \centering
\begin{tabular}{lrrr@{\hspace*{5mm}}r@{\hspace*{3mm}}r@{\hspace*{3mm}}r}
\hlinetop
Dataset & \#node & \#edge & clus.c. & \#comm & $\overline{|\setc|}$ & $\overline{\score(\setc)}$\\
\hlinemid
LFR (om=1)  &    5000 &    25125 & 0.039 &    101 & 49.5 & 0.302\\
LFR (om=2)  &    5000 &    25123 & 0.021 &    146 & 51.4 & 0.534\\
LFR (om=3)  &    5000 &    25126 & 0.016 &    191 & 52.4 & 0.647\\
LFR (om=4)  &    5000 &    25117 & 0.015 &    234 & 53.4 & 0.717\\
Karate      &      34 &       78 & 0.103 &      2 & 17.0 & 0.141\\
Football    &     115 &      613 & 0.186 &     12 & 9.6 & 0.402\\
Pol.Blogs   &    1490 &    16715 & 0.089 &      2 & 745.0 & 0.094\\
Pol.Books   &     105 &      441 & 0.151 &      3 & 35.0 & 0.322\\
Flickr      &   35313 &  3017530 & 0.030 &    171 & 4336.1 & 0.682\\
Amazon      &  334863 &   925872 & 0.079 & 151037 & 19.4 & 0.554\\
DBLP        &  317080 &  1049866 & 0.128 &  13477 & 53.4 & 0.622\\
Youtube     & 1134890 &  2987624 & 0.002 &   8385 & 13.5 & 0.916\\
LiveJournal & 3997962 & 34681189 & 0.045 & 287512 & 22.3 & 0.937\\
Orkut       & 3072441 & 117185083 & 0.014 & 6288363 & 14.2 & 0.977\\
CYC/Gavin 2006 &    6230 &     6531 & 0.121 &    408 & 4.7 & 0.793\\
CYC/Krogan 2006 &    6230 &     7075 & 0.075 &    408 & 4.7 & 0.733\\
CYC/Collins 2007 &    6230 &    14401 & 0.083 &    408 & 4.7 & 0.997\\
CYC/Costanzo 2010 &    6230 &    57772 & 0.022 &    408 & 4.7 & 0.996\\
CYC/Hoppins 2011 &    6230 &    10093 & 0.030 &    408 & 4.7 & 0.999\\
CYC/all     &    6230 &    80506 & 0.017 &    408 & 4.7 & 0.905\\
\hlinebot
\end{tabular}
  \caption{
    Overview of the datasets used in the experiments.
    For each dataset we consider three different sets of communities.
  }
  \label{tbl:datasets}
\end{table*}

\section{Experiments}


To test the proposed algorithms, we assess their performance on various networks. We also perform experiments  on recent state-of-the-art algorithms based on the diffusion method which also optimize conductance.

\subsection{Algorithms}

Specifically, we perform a comparative empirical analysis of the following algorithms.
\begin{enumerate}
  \item {\pgd}. The projected gradient descent algorithm for optimizing $\sigma$-conductance given in \algref{alg:pgd}.
  We show the results for two variants: {\pgds{0}} with $\sigma=0$ and {\pgdd} where $\sigma$ is chosen to maximize the community's density as described in \secref{sec:choosing-sigma}.
  
  \item \emc.  The Expectation Maximization algorithm for optimizing $\sigma$-conductance described in \secref{alg:em}.
  We consider the variants \emcs{0} with $\sigma=0$ and {\emcd} where $\sigma$ is chosen automatically.
  
  \item \YL.  The algorithm by \citet{Yang2012} (with conductance as scoring function), based on the diffusion method. It  computes an approximation of the personalized Page Rank graph diffusion vector \citep{andersen2006local}. The values in this vector are divided by the degree of the corresponding nodes, and the nodes are sorted  in descending order by their values.
  The ranking induces a one dimensional search space of communities $C_k$, called a sweep, defined by the sequence of prefixes of the sorted list, that is,  the $k$ top ranked nodes, for  $k=1, \dotsc, |V|$.
  The smallest $k$ whose $C_k$ is a `local optimum' of conductance  is computed and $C_k$ is extracted. Local optima of conductance over the one dimensional space $C_1, C_2, \dots, C_{|V|}$  are computed using a heuristic.  For increasing $k = 1, 2, \dotsc$ $\score(C_k)$ is measured. When $\score(C_k)$  stops decreasing at $k^*$  this is a  `candidate point' for a local minimum. It becomes a selected local minimum  if $\score(C_k)$ keeps increasing after $k^*$ and eventually becomes higher than $\alpha \score(C_k)$, otherwise it is discarded. $\alpha= 1.2$ is shown to give good results and is also used in our experiments. \citet{Yang2012}  show that  finding the local optima of the sweep curve instead of the global optimum  gives a large improvement over previous  local spectral clustering methods by \citet{Andersen2006}  and by \citet{Spielman2004}.
  %
  %
  
  \item \HK.  The algorithm by \citet{Kloster2014}, also based on the diffusion method. 
   Here, instead of using the Personalized PageRank score, nodes are ranked based on a Heat Kernel diffusion score \citep{Chung2007}.
We use the implementation made available by \citet{Kloster2014}, which tries different values of the algorithm's parameters  $t$ and $\epsilon$, and picks the community with the highest conductance among them. The details are in section 6.2 of \citep{Kloster2014}.    
   Code is available at \url{https://www.cs.purdue.edu/homes/dgleich/codes/hkgrow}.
  
  
  \item \PPR. The pprpush algorithm by \citet{Andersen2006} based on the personalized Page Rank graph diffusion.
  Compared to {\YL} instead of finding a local optimum of the sweep, the method looks for a global optimum, and hence often finds larger communities.
  We use the implementation included with the {\HK} method.
\end{enumerate}


\subsection{Datasets}
\subsubsection{Artificial Datasets}

The first set of experiments we performed is on artificially generated networks with a known community structure.
We use the LFR benchmark \citep{Fortunato2008BenchmarkGraphs}. 
We used the parameter settings \texttt{N=5000 mu=0.3 k=10 maxk=50 t1=2 t2=1 minc=20 maxc=100 on=2500}, which means that the graph has 5000 nodes, and between 20 and 100 communities, each with between 10 and 50 nodes. Half of the nodes, 2500 are a member of multiple communities. We vary the overlap parameter (\texttt{om}), which determines how many communities these nodes are in. More overlap makes the problem harder.

\subsubsection{Social and Information Network Datasets with Ground Truth}

We use five social and information network datasets with ground-truth from the SNAP collection \citep{snapnets}.
These datasets are summarized in \tblref{tbl:datasets}.
For each dataset we list the number of nodes, number of edges and the clustering coefficient.
We consider all available ground truth communities with at least 3 nodes.

\citet{Yang2012} also defined a set of top 5000 communities for each dataset. These are communities with a high combined score for several community goodness metrics, among which is conductance. We therefore believe that communities in this set are biased to be more easy to recover by optimizing conductance, and therefore do not consider them here. Results with these top 5000 ground truth communities are available in tables~1--3 in the supplementary material
\footnote{The supplementary material is available from \url{http://cs.ru.nl/~tvanlaarhoven/conductance2016}}.

In addition to the SNAP datasets we also include the Flickr social network dataset \citep{Wang-etal12-flickr-dataset}.

\subsubsection{Protein Interaction Network Datasets}

We have also run experiments on protein interaction networks of yeast from the BioGRID database \citep{biogrid}.
This database curates networks from several different studies. We have constructed networks for \citet{Gavin2006}, \citet{Krogan2006}, \citet{Collins2007}, \citet{Costanzo2010}, \citet{Hoppins2011}, as well as a network that is the union of all interaction networks confirmed by physical experiments.

As ground truth communities we take the CYC2008 catalog of protein complexes for each of the networks \citep{CYC2008}.

\subsubsection{Other Datasets}

Additionally we used some classical datasets with known communities:
Zachary's karate club \cite{Zachary1977};
Football: A network of American college football games \citep{GirvanNewman2002};
Political books: A network of books about US politics \citep{Krebs2004polbooks}; and
Political blogs: Hyperlinks between weblogs on US politics \citep{Adamic2005polblogs}.
These datasets might not be very well suited for this problem, since they have very few communities.

\subsection{Results}

In all our experiments we use a single seed node, drawn uniformly at random from the community.
We have also performed experiments with multiple seeds; the results of those experiments can be found in the supplementary material.

To keep the computation time manageable we have performed all experiments on a random sample of 1000 ground-truth communities.
For datasets with fewer than 1000 communities, we include the same community multiple times with different seeds.


Since the datasets here  considered have information about ground truth communities, a natural external validation criterion to assess the performance of algorithms on these datasets is to compare the community  produced by an algorithm with the ground truth one. In general, that is, when ground truth information is not available, this task is more subtle, because it is not clear what is a good external validation metric to evaluate a community \citep{Yang2015}.

\begin{table*}[t!]
  \centering
\begin{tabular}{lccccccc}
\hlinetop
Dataset & {\pgds{0}} & {\pgdd} & {\emcs{0}} & {\emcd} & {\YL} & {\HK} & {\PPR} \\
\hlinemid
LFR (om=1) & \textbf{0.967} & 0.185 & 0.868 & 0.187 & 0.203 & 0.040 & 0.041\\
LFR (om=2) & \textbf{0.483} & 0.095 & 0.293 & 0.092 & 0.122 & 0.039 & 0.041\\
LFR (om=3) & \textbf{0.275} & 0.085 & 0.158 & 0.083 & 0.110 & 0.037 & 0.039\\
LFR (om=4) & \textbf{0.178} & 0.074 & 0.100 & 0.072 & 0.092 & 0.032 & 0.034\\
Karate & 0.831 & 0.472 & 0.816 & 0.467 & 0.600 & 0.811 & \textbf{0.914}\\
Football & \textbf{0.792} & \textbf{0.816} & 0.766 & \textbf{0.805} & \textbf{0.816} & 0.471 & 0.283\\
Pol.Blogs & \textbf{0.646} & 0.141 & \textbf{0.661} & 0.149 & 0.017 & \textbf{0.661} & 0.535\\
Pol.Books & 0.596 & 0.187 & 0.622 & 0.197 & 0.225 & \textbf{0.641} & \textbf{0.663}\\
Flickr & 0.098 & 0.027 & 0.097 & 0.027 & 0.013 & 0.054 & \textbf{0.118}\\
Amazon & 0.470 & \textbf{0.522} & 0.425 & \textbf{0.522} & 0.493 & 0.245 & 0.130\\
DBLP & \textbf{0.356} & \textbf{0.369} & 0.317 & \textbf{0.371} & 0.341 & 0.214 & 0.210\\
Youtube & 0.089 & \textbf{0.251} & 0.073 & \textbf{0.248} & 0.228 & 0.037 & 0.071\\
LiveJournal & 0.067 & \textbf{0.262} & 0.059 & \textbf{0.259} & 0.183 & 0.035 & 0.049\\
Orkut & 0.042 & \textbf{0.231} & 0.033 & \textbf{0.231} & 0.171 & 0.057 & 0.033\\
CYC/Gavin 2006 & 0.474 & \textbf{0.543} & 0.455 & \textbf{0.543} & \textbf{0.526} & 0.336 & 0.294\\
CYC/Krogan 2006 & 0.410 & \textbf{0.513} & 0.364 & \textbf{0.511} & \textbf{0.504} & 0.229 & 0.169\\
CYC/Collins 2007 & 0.346 & \textbf{0.429} & 0.345 & \textbf{0.429} & \textbf{0.416} & 0.345 & 0.345\\
CYC/Costanzo 2010 & 0.174 & \textbf{0.355} & 0.172 & \textbf{0.351} & 0.314 & 0.170 & 0.170\\
CYC/Hoppins 2011 & 0.368 & \textbf{0.405} & 0.368 & \textbf{0.405} & \textbf{0.424} & 0.368 & 0.368\\
CYC/all & 0.044 & \textbf{0.459} & 0.017 & \textbf{0.459} & 0.425 & 0.016 & 0.002\\
\hlinebot
\end{tabular}
  \caption{
    Average $F_1$ score between recovered communities and ground-truth.
    The best result for each dataset is indicated in bold, as are the results not significantly worse according to a paired T-test (at significance level $0.01$).
  }
  \label{tbl:f1}
\end{table*}
\begin{table*}[t!]
  \centering
\begin{tabular}{lrrrrrrr}
\hlinetop
Dataset & {\pgds{0}} & {\pgdd} & {\emcs{0}} & {\emcd} & {\YL} & {\HK} & {\PPR} \\
\hlinemid
LFR (om=1) & 52.3 & 6.2 & 71.8 & 6.3 & 5.9 & 2410.0 & 2366.2\\
LFR (om=2) & 93.2 & 5.6 & 292.8 & 6.4 & 4.9 & 2404.4 & 2311.6\\
LFR (om=3) & 104.7 & 5.6 & 451.5 & 6.4 & 5.0 & 2399.4 & 2283.7\\
LFR (om=4) & 108.9 & 4.9 & 530.2 & 5.7 & 4.9 & 2389.1 & 2262.0\\
Karate & 20.0 & 8.0 & 24.1 & 8.0 & 8.8 & 16.7 & 17.1\\
Football & 14.7 & 9.5 & 16.2 & 9.4 & 8.8 & 40.5 & 56.5\\
Pol.Blogs & 515.6 & 110.1 & 538.9 & 118.1 & 7.2 & 492.7 & 1051.1\\
Pol.Books & 37.8 & 5.2 & 43.1 & 5.7 & 6.7 & 49.3 & 53.4\\
Flickr & 639.9 & 73.0 & 644.6 & 73.5 & 12.9 & 174.2 & 1158.1\\
Amazon & 25.2 & 5.6 & 45.6 & 5.7 & 6.4 & 88.8 & 20819.9\\
DBLP & 61.9 & 5.4 & 83.5 & 6.1 & 6.0 & 55.0 & 24495.0\\
Youtube & 340.6 & 19.4 & 474.2 & 21.3 & 9.3 & 147.9 & 20955.5\\
LiveJournal & 243.7 & 5.5 & 309.3 & 5.9 & 10.8 & 153.2 & 3428.7\\
Orkut & 245.1 & 17.9 & 344.7 & 19.1 & 11.1 & 212.0 & 1634.0\\
CYC/Gavin 2006 & 19.7 & 3.4 & 34.3 & 3.5 & 3.1 & 236.7 & 621.9\\
CYC/Krogan 2006 & 48.4 & 7.0 & 138.8 & 9.4 & 3.7 & 723.8 & 1756.3\\
CYC/Collins 2007 & 202.9 & 19.5 & 207.2 & 19.5 & 2.6 & 192.0 & 189.4\\
CYC/Costanzo 2010 & 540.2 & 58.1 & 564.2 & 66.5 & 5.9 & 1058.8 & 942.9\\
CYC/Hoppins 2011 & 229.9 & 110.1 & 235.5 & 110.4 & 4.3 & 295.2 & 295.2\\
CYC/all & 657.5 & 16.0 & 841.9 & 17.0 & 9.6 & 2795.2 & 5786.0\\
\hlinebot
\end{tabular}
  \caption{
    Average size of the recovered communities.
  }
  \label{tbl:cluster-size}
\end{table*}
\begin{table*}[t!]
  \centering
\begin{tabular}{lccccccc}
\hlinetop
Dataset & {\pgds{0}} & {\pgdd} & {\emcs{0}} & {\emcd} & {\YL} & {\HK} & {\PPR} \\
\hlinemid
LFR (om=1) & 0.301 & 0.750 & 0.304 & 0.749 & 0.755 & \textbf{0.250} & 0.273\\
LFR (om=2) & 0.532 & 0.786 & 0.541 & 0.787 & 0.780 & \textbf{0.315} & 0.338\\
LFR (om=3) & 0.589 & 0.793 & 0.587 & 0.793 & 0.781 & \textbf{0.333} & 0.354\\
LFR (om=4) & 0.604 & 0.791 & 0.595 & 0.792 & 0.775 & \textbf{0.341} & 0.359\\
Karate & 0.129 & 0.460 & \textbf{0.081} & 0.475 & 0.327 & 0.222 & 0.136\\
Football & 0.277 & 0.356 & 0.274 & 0.362 & 0.385 & 0.244 & \textbf{0.155}\\
Pol.Blogs & 0.228 & 0.743 & 0.212 & 0.737 & 0.867 & 0.229 & \textbf{0.137}\\
Pol.Books & 0.140 & 0.622 & 0.107 & 0.611 & 0.571 & 0.127 & \textbf{0.065}\\
Flickr & \textbf{0.777} & 0.937 & \textbf{0.777} & 0.937 & 0.951 & 0.864 & \textbf{0.762}\\
Amazon & 0.181 & 0.464 & 0.180 & 0.463 & 0.402 & 0.081 & \textbf{0.053}\\
DBLP & 0.246 & 0.571 & 0.257 & 0.565 & 0.498 & \textbf{0.133} & 0.147\\
Youtube & 0.601 & 0.765 & 0.711 & 0.759 & 0.700 & \textbf{0.201} & 0.341\\
LiveJournal & 0.563 & 0.875 & 0.589 & 0.874 & 0.774 & \textbf{0.336} & 0.489\\
Orkut & \textbf{0.718} & 0.916 & 0.731 & 0.917 & 0.928 & 0.750 & \textbf{0.711}\\
CYC/Gavin 2006 & 0.614 & 0.734 & 0.611 & 0.732 & 0.735 & \textbf{0.532} & \textbf{0.500}\\
CYC/Krogan 2006 & 0.466 & 0.626 & 0.469 & 0.620 & 0.617 & 0.325 & \textbf{0.265}\\
CYC/Collins 2007 & \textbf{0.716} & 0.953 & \textbf{0.712} & 0.953 & 0.972 & \textbf{0.720} & \textbf{0.730}\\
CYC/Costanzo 2010 & 0.759 & 0.931 & 0.755 & 0.929 & 0.934 & \textbf{0.672} & \textbf{0.646}\\
CYC/Hoppins 2011 & \textbf{0.788} & 0.883 & \textbf{0.785} & 0.882 & 0.970 & \textbf{0.763} & \textbf{0.763}\\
CYC/all & 0.674 & 0.872 & 0.742 & 0.874 & 0.840 & 0.363 & \textbf{0.026}\\
\hlinebot
\end{tabular}
  \caption{
    Average conductance of the recovered communities.
  }
  \label{tbl:conductance}
\end{table*}

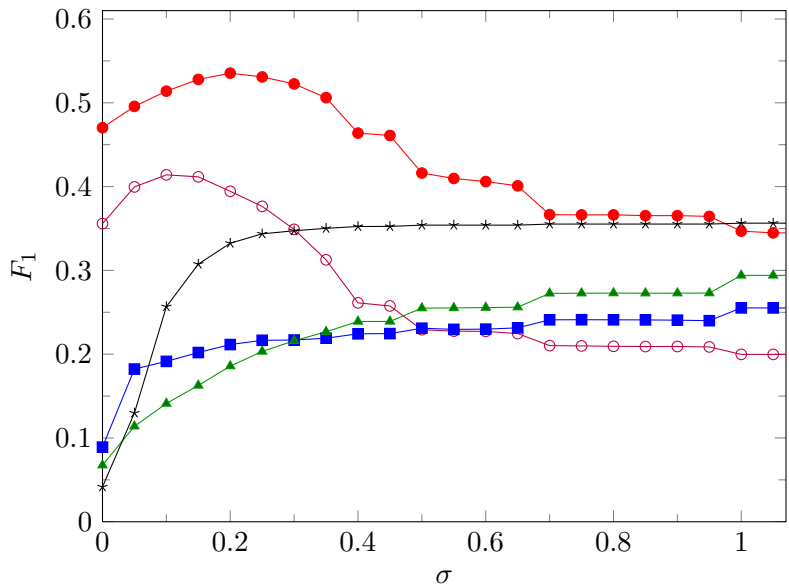
\begin{figure}
  \centering
  \begin{tikzpicture}
    \begin{axis}[
        width=0.7\linewidth,height=0.55\linewidth,
        xlabel={$\sigma$},ylabel={$F_1$},
        minor tick num=1,
        xmin=0,xmax=1.07,ymin=0,ymax=0.61,enlargelimits=false,
        cycle list name=datasets,
        mark repeat=1,
        legend pos=outer north east, legend cell align=left, legend style={font=\small}
        ]
\addplot+[] coordinates{(0.000000,0.355800) (0.050000,0.399609) (0.100000,0.414009) (0.150000,0.411635) (0.200000,0.394356) (0.250000,0.376406) (0.300000,0.348982) (0.350000,0.312628) (0.400000,0.261248) (0.450000,0.257541) (0.500000,0.229172) (0.550000,0.227514) (0.600000,0.227205) (0.650000,0.224550) (0.700000,0.210215) (0.750000,0.209859) (0.800000,0.209395) (0.850000,0.209118) (0.900000,0.209169) (0.950000,0.208647) (1.000000,0.199739) (1.050000,0.199739) (1.100000,0.199739) (1.150000,0.199739) (1.200000,0.199739) (1.250000,0.199739) (1.300000,0.199739) (1.350000,0.199739) (1.400000,0.199739) (1.450000,0.199739) (1.500000,0.199739) (1.550000,0.199739) (1.600000,0.199739) (1.650000,0.199739) (1.700000,0.199739) (1.750000,0.199739) (1.800000,0.199739) (1.850000,0.199739) (1.900000,0.199739) (1.950000,0.199739) (2.000000,0.199206) };
addlegendentry{DBLP}
      %
\addplot+[] coordinates{(0.000000,0.470247) (0.050000,0.495688) (0.100000,0.513903) (0.150000,0.527889) (0.200000,0.535385) (0.250000,0.530863) (0.300000,0.522499) (0.350000,0.506155) (0.400000,0.463828) (0.450000,0.461005) (0.500000,0.416160) (0.550000,0.409687) (0.600000,0.406002) (0.650000,0.400886) (0.700000,0.366486) (0.750000,0.366232) (0.800000,0.366267) (0.850000,0.365322) (0.900000,0.365354) (0.950000,0.364430) (1.000000,0.346855) (1.050000,0.344757) (1.100000,0.344757) (1.150000,0.344757) (1.200000,0.344757) (1.250000,0.344757) (1.300000,0.344757) (1.350000,0.344757) (1.400000,0.344757) (1.450000,0.344757) (1.500000,0.344757) (1.550000,0.344757) (1.600000,0.344757) (1.650000,0.344757) (1.700000,0.344757) (1.750000,0.344757) (1.800000,0.344757) (1.850000,0.344757) (1.900000,0.344757) (1.950000,0.344757) (2.000000,0.341342) };
addlegendentry{Amazon}
      %
\addplot+[] coordinates{(0.000000,0.089102) (0.050000,0.182247) (0.100000,0.191400) (0.150000,0.201947) (0.200000,0.211479) (0.250000,0.216549) (0.300000,0.216883) (0.350000,0.219052) (0.400000,0.224307) (0.450000,0.224561) (0.500000,0.230972) (0.550000,0.229599) (0.600000,0.229850) (0.650000,0.231430) (0.700000,0.241004) (0.750000,0.241109) (0.800000,0.241016) (0.850000,0.240902) (0.900000,0.240584) (0.950000,0.239967) (1.000000,0.255225) (1.050000,0.255156) (1.100000,0.255156) (1.150000,0.255156) (1.200000,0.255156) (1.250000,0.255156) (1.300000,0.255156) (1.350000,0.255156) (1.400000,0.255156) (1.450000,0.255156) (1.500000,0.255156) (1.550000,0.255156) (1.600000,0.255156) (1.650000,0.255156) (1.700000,0.255156) (1.750000,0.255156) (1.800000,0.255156) (1.850000,0.255156) (1.900000,0.255156) (1.950000,0.255156) (2.000000,0.324277) };
addlegendentry{Youtube}
      %
\addplot+[] coordinates{(0.000000,0.067490) (0.050000,0.113929) (0.100000,0.141112) (0.150000,0.162706) (0.200000,0.185759) (0.250000,0.203251) (0.300000,0.215998) (0.350000,0.226733) (0.400000,0.239006) (0.450000,0.239079) (0.500000,0.254950) (0.550000,0.255206) (0.600000,0.255601) (0.650000,0.256086) (0.700000,0.272572) (0.750000,0.272723) (0.800000,0.272901) (0.850000,0.272753) (0.900000,0.272798) (0.950000,0.272886) (1.000000,0.294033) (1.050000,0.294033) (1.100000,0.294033) (1.150000,0.294033) (1.200000,0.294033) (1.250000,0.294033) (1.300000,0.294033) (1.350000,0.294033) (1.400000,0.294033) (1.450000,0.294033) (1.500000,0.294033) (1.550000,0.294033) (1.600000,0.294033) (1.650000,0.294033) (1.700000,0.294033) (1.750000,0.294033) (1.800000,0.294033) (1.850000,0.294033) (1.900000,0.294033) (1.950000,0.294033) (2.000000,0.318197) };
addlegendentry{LiveJournal}
      %
\addplot+[] coordinates{(0.000000,0.041593) (0.050000,0.129796) (0.100000,0.256721) (0.150000,0.307451) (0.200000,0.332399) (0.250000,0.343605) (0.300000,0.347253) (0.350000,0.350179) (0.400000,0.352398) (0.450000,0.352462) (0.500000,0.354006) (0.550000,0.354006) (0.600000,0.354006) (0.650000,0.354006) (0.700000,0.355276) (0.750000,0.355276) (0.800000,0.355276) (0.850000,0.355276) (0.900000,0.355276) (0.950000,0.355276) (1.000000,0.356345) (1.050000,0.356345) (1.100000,0.356345) (1.150000,0.356345) (1.200000,0.356345) (1.250000,0.356345) (1.300000,0.356345) (1.350000,0.356345) (1.400000,0.356345) (1.450000,0.356345) (1.500000,0.356345) (1.550000,0.356345) (1.600000,0.356345) (1.650000,0.356345) (1.700000,0.356345) (1.750000,0.356345) (1.800000,0.356345) (1.850000,0.356345) (1.900000,0.356345) (1.950000,0.356345) (2.000000,0.357790) };
addlegendentry{Orkut}
    \end{axis}
  \end{tikzpicture}
  \caption{Average $F_1$ score as a function of the $\sigma$ parameter on the SNAP datasets with the {\pgd} method.}
  \label{fig:sigma-f1}
\end{figure}

We measure quality performance with the $F_1$ score, which for community finding can be defined as
\[
  F_1(\setc,\setctrue) 
           = 2\frac{|\setc\cap \setctrue|}{|\setc|+|\setctrue|},
\]
where $\setc$ is the recovered community and $\setctrue$ is the ground truth one.
A higher $F_1$ score is better, with $1$ indicating a perfect correspondence between the two communities.

Note that a seed node might be in multiple ground truth communities.
In this case we only compare the recovered community to the true community that we started with. If a method finds another ground truth community this is not detected, and so it results in a low $F_1$ score.

We also analyze the output of these algorithms with respect to the conductance of produced communities and  their size.
Results on the run time of the algorithms are reported in the supplementary material (Table~4).

\Figref{fig:sigma-f1} shows the $F_1$ scores as a function of the parameter $\sigma$.
\Tblref{tbl:f1} shows the $F_1$ scores comparing the results of the methods to the true communities.
\Tblref{tbl:cluster-size} shows the mean size of the found communities, and \tblref{tbl:conductance} their conductance.

In general, results of these experiments indicate that on real-life networks, our methods based on continuous relaxation of conductance, {\PPR} and {\HK} produce communities with good conductance, but all are less faithful to the ground truth when the network contains many small communities.  In {\pgd}, {\emc} the automatic choice of $\sigma$ helps to achieve results closer to the ground truth, and the built-in tendency of {\YL} to favor small communities helps as well. 
On the other hand, on networks with large communities our methods, {\PPR} and {\HK}  work best.
On the artificial LFR data continuous relaxation of conductance seems to work best. This result indicates that the LFR model of `what is a community' is somehow in agreement with the notion of local community as local optimum of the continuous relaxation of conductance. However, as observed in  recent works like \citep{jeub2015think},  the LFR model does not seem to represent the diverse characteristics of real-life communities.

We have included tables of the standard deviation in the supplementary material.
Overall, the standard deviation in cluster size is of the same order of magnitude as the mean. The standard deviation of the conductance is around $0.1$ for LFR datasets, $0.2$ for the SNAP datasets and $0.3$ for the CYC datasets. It is not surprising that the variance is this high, because the communities vary a lot in size and density.

Results on these datasets can be summarized as follows. 

%
%
%

\subsubsection{Artificial LFR Datasets}
On these datasets, {\HK} and {\PPR} tend to find communities that are much too large, with small conductance but also with low $F_1$ scores.
This happens because the LFR networks are small, and the methods are therefore able to consider a large part of the nodes in the network.

On the other hand, {\YL} always starts its search at small communities, and it stops early, so the communities it finds are smaller than the ground truth ones on these networks.

The best $F_1$ results are achieved by {\pgd} with $\sigma=0$, that is, when the continuous relaxation of conductance is used as the objective function. This method employs a more powerful optimizer than YL, so it is able to find a large community with a better conductance, but it still stops at the first local optimum. In the LFR datasets these optima are very clear, and correspond closely to the ground truth communities.

In all cases {\emc} shows similar or slightly worse performance compared to {\pgd}, so the gradient descend algorithm should be preferred.

The automatic choice of $\sigma$ leads to communities which are of relatively small size. We believe that this happens because the nodes in LFR datasets all have exactly the same fraction of within community edges. Increasing $\sigma$ suddenly makes the gradient for most of these nodes positive. In real networks there are often hubs that are more central to a community, with more connections to the community's nodes and to the seed. These hubs still can be found at higher values of $\sigma$.


\subsubsection{Small Real-Life Social Networks with Few Communities (Karate, Football, Blogs, Books)}  
Our methods based on continuous relaxation of conductance yield the best $F_1$ results on the Football and Blog networks, while {\PPR} performs best on the other two networks and achieves best overall conductance. 

\subsubsection{Large Social Network with Big Communities (Flickr)}
On this network, {\PPR} achieves the best results both in terms of $F_1$ score as well as conductance. However the the produced communities are about four time smaller than the ground truth communities, which have more than $4000$ nodes.
{\pgd} and {\emc} with $\sigma=0$ yield communities of  conductance similar to that of {\PPR} communities, but their  size is smaller (about $650$ nodes).
This happens because the algorithms are restricted to $1000$ nodes around the seed, without this restriction larger communities would be found.
Somewhat surprisingly {\HK} produces communities of relatively small size (about $175$ nodes).
The automatic choice of $\sigma$ yields to even smaller communities (about $64$ nodes). The smallest size communities are produced by {\YL} (about $13$ nodes).

\subsubsection{Large Real-Life SNAP Networks with Many Small Communities}
On these networks the automatic choice of $\sigma$ gives best results, consistently outperforming the other algorithms. In \figref{fig:sigma-f1}, the  $F_1$ score of   {\pgd} and {\emc} as a function of $\sigma$ is plotted. For some datasets a small value of $\sigma$ works well, while for others a larger value of $\sigma$ is better.  Our procedure to choose $\sigma$ produces results that are close to, but slightly below, the best a posteriori choice of $\sigma$.  So on these networks the proposed procedure positively affects  the performance of our algorithms.
{\YL} favors communities of small size less faithful to the ground truth. {\pgds{0}}, {\emcs{0}}, {\PPR} and {\HK} `explode', and produce very large communities. For our methods this `explosion' is limited only because we limit the search to 1000 nodes near the seed.
The ground-truth communities of these datasets have rather high conductance, and the networks have  a very low clustering coefficient.
In such a case, communities have many links to nodes outside, hence conductance alone is clearly not suited to finding these type of local communities.

\subsubsection{Real-Life Protein Interaction Networks with Very Small Communities (CYC)}
Also on these networks the automatic choice of $\sigma$ gives best results. As expected, due to the very small size of the ground truth communities, {\YL} also achieves very good results. The other algorithms tend to produce less realistic,  large communities which have better conductance.

\subsubsection{Running Time}
The best performing algorithm with respect to running time is {\HK}.
{\pgd} and {\emc} are about four times slower with a fixed value of $\sigma$, and ten to twenty times slower when automatically determining $\sigma$.
The running times are included in Table~4 of the supplementary material. All experiments were run on a 2.4GHz Intel XEON E7-4870 machine.
Note that the different methods are implemented in different languages (our implementation is written in Octave, while {\HK} and {\PPR} are implemented in C++), so the running times only give an indication of the overall trend, and can not be compared directly.


\subsubsection{Top 5000 Communities}
Results with only the top 5000 ground truth communities available at the SNAP dataset collection  are similar to the results with all communities.
As expected, the $F_1$ score is much higher and the conductance of the recovered community is better.
Because these ground truth communities have a better conductance, it is better to optimize conductance, that is to take $\sigma=0$. As a consequence the performance of {\pgds{0}} and {\emcs{0}} is better than that of {\pgdd} and {\emcd} for these communities.
%
The full results are available in Tables~1--3 in the supplementary material.

\section{Discussion}

This paper investigated conductance as an objective function for local community detection from a set of seeds.
By making a continuous relaxation of conductance we show how standard techniques such as projected gradient descent can be used to optimize it.
Even though this is a continuous optimization problem, we show that the local optima are almost always discrete communities.
We further showed how linking conductance with kernel weighted $k$-means clustering leads to the new $\sigma$-conductance objective function and to simple yet effective algorithms for local community detection by seed expansion.   

We provided a formalization of a class of good local communities around a set of seeds and showed that the proposed algorithms can find them.  We suspect that these communities can also be exactly retrieved using local community algorithms based on the diffusion method, but do not yet have a proof.
The condition that such communities should be centered around the seeds raises the question of how to find such seeds. Although various works have studied seed selection for diffusion  based algorithms, such as \cite{Kloumann2014},  this problem remains to be investigated in the context of local community detection by $\sigma$-conductance optimization using {\pgd} and {\emc}.

Our experimental results indicate the effectiveness of direct optimization of a continuous relaxation of $\sigma$-conductance using gradient descent and expectation maximization. In our algorithms we used community density as a criterion to choose $\sigma$. 
This resulted to be a good choice for the performance of our algorithms on the SNAP networks. It would be interesting to investigate also other criteria to choose $\sigma$.
Conversely, the fact that maximum density is a good criterion for selecting $\sigma$ implies that it might also be directly optimized as an objective for finding communities.

On some datasets, when optimizing normal conductance, that is, with $\sigma=0$, our methods sometimes find very large communities. These communities will have a very good conductance, but they do not correspond well to the ground truth. In some sense the optimizer is `too good', and conductance is not the best criterion to describe these communities.
A better objective would perhaps take into account the size of the community more explicitly, but this needs to be investigated further.

In this paper we have used gradient descent, a first order optimization method which utilizes only the objective function's gradient. More advanced optimization methods also use second derivatives or approximations of those.
We believe that such methods will not bring a large advantage compared to gradient descent, because during the optimization many coordinates are at the boundary value 0 or 1, and second derivatives would not help to locate these boundary points. Other constrained optimizers such as interior point methods have the problem that they need to inspect a much larger part of the network, potentially all of it, because intermediate steps have nonzero membership for all nodes.




\section*{Acknowledgments}
We thank the reviewers for their useful comments.
This work has been partially funded by the Netherlands Organization for Scientific Research (NWO) within the EW TOP Compartiment 1 project 612.001.352.

\bibliography{clustering}

\begin{thebibliography}{47}
\providecommand{\natexlab}[1]{#1}
\providecommand{\url}[1]{\texttt{#1}}
\expandafter\ifx\csname urlstyle\endcsname\relax
  \providecommand{\doi}[1]{doi: #1}\else
  \providecommand{\doi}{doi: \begingroup \urlstyle{rm}\Url}\fi

\bibitem[Adamic and Glance(2005)]{Adamic2005polblogs}
Lada Adamic and Natalie Glance.
\newblock The political blogosphere and the 2004 {U.S.} election: Divided they
  blog.
\newblock In \emph{LinkKDD '05: Proceedings of the 3rd international workshop
  on Link discovery}, pages 36--43, 2005.

\bibitem[Andersen and Lang(2006)]{Andersen2006}
Reid Andersen and Kevin~J. Lang.
\newblock Communities from seed sets.
\newblock In \emph{Proceedings of the 15th International Conference on World
  Wide Web}, WWW '06, pages 223--232, New York, NY, USA, 2006. ACM.
\newblock ISBN 1-59593-323-9.

\bibitem[Andersen et~al.(2006)Andersen, Chung, and Lang]{andersen2006local}
Reid Andersen, Fan Chung, and Kevin Lang.
\newblock Local graph partitioning using pagerank vectors.
\newblock In \emph{Foundations of Computer Science, 2006. FOCS'06. 47th Annual
  IEEE Symposium on}, pages 475--486. IEEE, 2006.

\bibitem[Avron and Horesh(2015)]{avron2015community}
Haim Avron and Lior Horesh.
\newblock Community detection using time-dependent personalized pagerank.
\newblock In \emph{Proceedings of The 32nd International Conference on Machine
  Learning}, pages 1795--1803, 2015.

\bibitem[Chawla et~al.(2005)Chawla, Krauthgamer, Kumar, Rabani, and
  Sivakumar]{Chawla2005}
Shuchi Chawla, Robert Krauthgamer, Ravi Kumar, Yuval Rabani, and D.~Sivakumar.
\newblock On the hardness of approximating multicut and sparsest-cut.
\newblock In \emph{Proceedings of the 20th Annual IEEE Conference on
  Computational Complexity}, CCC '05, pages 144--153, Washington, DC, USA,
  2005. IEEE Computer Society.
\newblock ISBN 0-7695-2364-1.

\bibitem[Chen et~al.(2009)Chen, Za{\"\i}ane, and Goebel]{chen2009local}
Jiyang Chen, Osmar Za{\"\i}ane, and Randy Goebel.
\newblock Local community identification in social networks.
\newblock In \emph{Social Network Analysis and Mining, 2009. ASONAM'09.
  International Conference on Advances in}, pages 237--242. IEEE, 2009.

\bibitem[Chen et~al.(2013)Chen, Wu, and Fang]{chen2013detecting}
Qiong Chen, Ting-Ting Wu, and Ming Fang.
\newblock Detecting local community structures in complex networks based on
  local degree central nodes.
\newblock \emph{Physica A: Statistical Mechanics and its Applications},
  392\penalty0 (3):\penalty0 529--537, 2013.

\bibitem[Chung(2007)]{Chung2007}
Fan Chung.
\newblock The heat kernel as the pagerank of a graph.
\newblock \emph{Proceedings of the National Academy of Sciences}, 104\penalty0
  (50):\penalty0 19735--19740, 2007.

\bibitem[Clauset(2005)]{clauset2005finding}
Aaron Clauset.
\newblock Finding local community structure in networks.
\newblock \emph{Physical review E}, 72\penalty0 (2):\penalty0 026132, 2005.

\bibitem[Collins et~al.(2007)Collins, Kemmeren, Zhao, Greenblatt, Spencer,
  Holstege, Weissman, and Krogan]{Collins2007}
Sean Collins, et~al.
\newblock Towards a comprehensive atlas of the physical interactome of
  saccharomyces cerevisiae.
\newblock \emph{Molecular Cellular Proteomics}, pages 600381--600200, 2007.

\bibitem[Costanzo et~al.(2010)Costanzo, Baryshnikova, Bellay, Kim, Spear,
  Sevier, Ding, Koh, Toufighi, Mostafavi, Prinz, St.~Onge, VanderSluis,
  Makhnevych, Vizeacoumar, Alizadeh, Bahr, Brost, Chen, Cokol, Deshpande, Li,
  Lin, Liang, Marback, Paw, San~Luis, Shuteriqi, Tong, van Dyk, Wallace,
  Whitney, Weirauch, Zhong, Zhu, Houry, Brudno, Ragibizadeh, Papp, Pál, Roth,
  Giaever, Nislow, Troyanskaya, Bussey, Bader, Gingras, Morris, Kim, Kaiser,
  Myers, Andrews, and Boone]{Costanzo2010}
Michael Costanzo, et~al.
\newblock The genetic landscape of a cell.
\newblock \emph{Science}, 327\penalty0 (5964):\penalty0 425--431, 2010.

\bibitem[Dhillon et~al.(2007)Dhillon, Guan, and Kulis]{Dhillon:2007}
Inderjit~S. Dhillon, Yuqiang Guan, and Brian Kulis.
\newblock Weighted graph cuts without eigenvectors a multilevel approach.
\newblock \emph{IEEE Trans. Pattern Anal. Mach. Intell.}, 29\penalty0
  (11):\penalty0 1944--1957, November 2007.
\newblock ISSN 0162-8828.

\bibitem[Fortunato(2010)]{Fortunato2010}
Santo Fortunato.
\newblock {Community detection in graphs}.
\newblock \emph{Physics Reports}, 486:\penalty0 75--174, 2010.

\bibitem[Gargi et~al.(2011)Gargi, Lu, Mirrokni, and Yoon]{gargi2011large}
Ullas Gargi, Wenjun Lu, Vahab~S Mirrokni, and Sangho Yoon.
\newblock Large-scale community detection on youtube for topic discovery and
  exploration.
\newblock In \emph{ICWSM}, 2011.

\bibitem[Gavin et~al.(2006)Gavin, Aloy, Grandi, Krause, Boesche, Marzioch, Rau,
  Jensen, Bastuck, Dumpelfeld, Edelmann, Heurtier, Hoffman, Hoefert, Klein,
  Hudak, Michon, Schelder, Schirle, Remor, Rudi, Hooper, Bauer, Bouwmeester,
  Casari, Drewes, Neubauer, Rick, Kuster, Bork, Russell, and
  Superti-Furga]{Gavin2006}
Anne-Claude Gavin, et~al.
\newblock Proteome survey reveals modularity of the yeast cell machinery.
\newblock \emph{Nature}, 440\penalty0 (7084):\penalty0 631--636, 2006.

\bibitem[Girvan and Newman(2002)]{GirvanNewman2002}
Michelle Girvan and Mark E.~J. Newman.
\newblock {Community structure in social and biological networks}.
\newblock \emph{Proceedings of the National Academy of Sciences of the United
  States of America}, 99\penalty0 (12):\penalty0 7821--7826, 2002.

\bibitem[Gleich and Seshadhri(2012)]{Gleich:2012}
David~F. Gleich and C.~Seshadhri.
\newblock Vertex neighborhoods, low conductance cuts, and good seeds for local
  community methods.
\newblock In \emph{Proceedings of the 18th ACM SIGKDD International Conference
  on Knowledge Discovery and Data Mining}, KDD '12, pages 597--605, New York,
  NY, USA, 2012. ACM.
\newblock ISBN 978-1-4503-1462-6.

\bibitem[Hoppins et~al.(2011)Hoppins, Collins, Cassidy-Stone, Hummel, DeVay,
  Lackner, Westermann, Schuldiner, Weissman, and Nunnari]{Hoppins2011}
Suzanne Hoppins, et~al.
\newblock A mitochondrial-focused genetic interaction map reveals a
  scaffold-like complex required for inner membrane organization in
  mitochondria.
\newblock \emph{The Journal of Cell Biology}, 195\penalty0 (2):\penalty0
  323--340, 2011.

\bibitem[Jeub et~al.(2015)Jeub, Balachandran, Porter, Mucha, and
  Mahoney]{jeub2015think}
Lucas~GS Jeub, Prakash Balachandran, Mason~A Porter, Peter~J Mucha, and
  Michael~W Mahoney.
\newblock Think locally, act locally: Detection of small, medium-sized, and
  large communities in large networks.
\newblock \emph{PHYSICAL REVIEW E Phys Rev E}, 91:\penalty0 012821, 2015.

\bibitem[Kang and Faloutsos(2011)]{Kang2011}
U~Kang and Christos Faloutsos.
\newblock Beyond ``caveman communities'': Hubs and spokes for graph compression
  and mining.
\newblock In \emph{Data Mining (ICDM), 2011 IEEE 11th International Conference
  on}, pages 300--309. IEEE, 2011.

\bibitem[Kloster and Gleich(2014)]{Kloster2014}
Kyle Kloster and David~F. Gleich.
\newblock Heat kernel based community detection.
\newblock In \emph{Proceedings of the 20th ACM SIGKDD International Conference
  on Knowledge Discovery and Data Mining}, KDD '14, pages 1386--1395, New York,
  NY, USA, 2014. ACM.
\newblock ISBN 978-1-4503-2956-9.

\bibitem[Kloumann and Kleinberg(2014)]{Kloumann2014}
Isabel~M Kloumann and Jon~M Kleinberg.
\newblock Community membership identification from small seed sets.
\newblock In \emph{Proceedings of the 20th ACM SIGKDD international conference
  on Knowledge discovery and data mining}, pages 1366--1375. ACM, 2014.

\bibitem[Krebs(2004)]{Krebs2004polbooks}
Valdis Krebs.
\newblock New political patterns.
\newblock Editorial, 2004.

\bibitem[Krogan et~al.(2006)Krogan, Cagney, Yu, Zhong, Guo, Ignatchenko, Li,
  Pu, Datta, Tikuisis, Punna, Peregr{\~a}n-Alvarez, Shales, Zhang, Davey,
  Robinson, Paccanaro, Bray, Sheung, Beattie, Richards, Canadien, Lalev, Mena,
  Wong, Starostine, Canete, Vlasblom, Wu, Orsi, Collins, Chandran, Haw,
  Rilstone, Gandi, Thompson, Musso, St~Onge, Ghanny, Lam, Butland, Altaf-Ul,
  Kanaya, Shilatifard, O'Shea, Weissman, Ingles, Hughes, Parkinson, Gerstein,
  Wodak, Emili, and Greenblatt]{Krogan2006}
Nevan~J. Krogan, et~al.
\newblock Global landscape of protein complexes in the yeast saccharomyces
  cerevisiae.
\newblock \emph{Nature}, 440\penalty0 (7084):\penalty0 637--643, 2006.

\bibitem[Lancichinetti et~al.(2008)Lancichinetti, Fortunato, and
  Radicchi]{Fortunato2008BenchmarkGraphs}
Andrea Lancichinetti, Santo Fortunato, and Filippo Radicchi.
\newblock {Benchmark graphs for testing community detection algorithms}.
\newblock \emph{Physical Review E}, 78:\penalty0 046110, 2008.

\bibitem[Lancichinetti et~al.(2011)Lancichinetti, Radicchi, Ramasco, Fortunato,
  et~al.]{Lancichinetti2011finding}
Andrea Lancichinetti, Filippo Radicchi, Jos{\'e}~J Ramasco, Santo Fortunato,
  et~al.
\newblock Finding statistically significant communities in networks.
\newblock \emph{PloS one}, 6\penalty0 (4):\penalty0 e18961, 2011.

\bibitem[Leskovec and Krevl(2014)]{snapnets}
Jure Leskovec and Andrej Krevl.
\newblock {SNAP Datasets}: {Stanford} large network dataset collection.
\newblock \url{http://snap.stanford.edu/data}, June 2014.

\bibitem[Leskovec et~al.(2008)Leskovec, Lang, Dasgupta, and
  Mahoney]{leskovec2008statistical}
Jure Leskovec, Kevin~J Lang, Anirban Dasgupta, and Michael~W Mahoney.
\newblock Statistical properties of community structure in large social and
  information networks.
\newblock In \emph{Proceedings of the 17th international conference on World
  Wide Web}, pages 695--704. ACM, 2008.

\bibitem[Leskovec et~al.(2010)Leskovec, Lang, and
  Mahoney]{leskovec2010empirical}
Jure Leskovec, Kevin~J Lang, and Michael Mahoney.
\newblock Empirical comparison of algorithms for network community detection.
\newblock In \emph{Proceedings of the 19th international conference on World
  wide web}, pages 631--640. ACM, 2010.

\bibitem[Lov{\'a}sz and Simonovits(1990)]{lovasz1990mixing}
L{\'a}szl{\'o} Lov{\'a}sz and Mikl{\'o}s Simonovits.
\newblock The mixing rate of {M}arkov chains, an isoperimetric inequality, and
  computing the volume.
\newblock In \emph{Foundations of Computer Science, 1990. Proceedings., 31st
  Annual Symposium on}, pages 346--354. IEEE, 1990.

\bibitem[Mahoney et~al.(2012)Mahoney, Orecchia, and Vishnoi]{mahoney2012local}
Michael~W Mahoney, Lorenzo Orecchia, and Nisheeth~K Vishnoi.
\newblock A local spectral method for graphs: With applications to improving
  graph partitions and exploring data graphs locally.
\newblock \emph{The Journal of Machine Learning Research}, 13\penalty0
  (1):\penalty0 2339--2365, 2012.

\bibitem[Mishra et~al.(2008)Mishra, Schreiber, Stanton, and
  Tarjan]{MishraSST08}
Nina Mishra, Robert Schreiber, Isabelle Stanton, and Robert~Endre Tarjan.
\newblock Finding strongly knit clusters in social networks.
\newblock \emph{Internet Mathematics}, 5\penalty0 (1):\penalty0 155--174, 2008.

\bibitem[Mislove et~al.(2010)Mislove, Viswanath, Gummadi, and
  Druschel]{Mislove:2010}
Alan Mislove, Bimal Viswanath, Krishna~P. Gummadi, and Peter Druschel.
\newblock You are who you know: Inferring user profiles in online social
  networks.
\newblock In \emph{Proceedings of the Third ACM International Conference on Web
  Search and Data Mining}, WSDM '10, pages 251--260, New York, NY, USA, 2010.
  ACM.
\newblock ISBN 978-1-60558-889-6.

\bibitem[Pu et~al.(2009)Pu, Wong, Turner, Cho, and Wodak]{CYC2008}
Shuye Pu, Jessica Wong, Brian Turner, Emerson Cho, and Shoshana~J. Wodak.
\newblock Up-to-date catalogues of yeast protein complexes.
\newblock \emph{Nucleic Acids Research}, 37\penalty0 (3):\penalty0 825--831,
  2009.

\bibitem[Schaeffer(2007)]{Schaeffer2007}
Satu~Elisa Schaeffer.
\newblock Graph clustering.
\newblock \emph{Computer Science Review}, 1\penalty0 (1):\penalty0 27--64,
  2007.
\newblock ISSN 15740137.

\bibitem[{\v{S}}{\'\i}ma and Schaeffer(2006)]{vsima2006np}
Ji{\v{r}}{\'\i} {\v{S}}{\'\i}ma and Satu~Elisa Schaeffer.
\newblock On the {NP}-completeness of some graph cluster measures.
\newblock In \emph{SOFSEM 2006: Theory and Practice of Computer Science}, pages
  530--537. Springer, 2006.

\bibitem[Spielman and Teng(2004)]{Spielman2004}
Daniel~A Spielman and Shang-Hua Teng.
\newblock Nearly-linear time algorithms for graph partitioning, graph
  sparsification, and solving linear systems.
\newblock In \emph{Proceedings of the thirty-sixth annual ACM symposium on
  Theory of computing}, pages 81--90. ACM, 2004.

\bibitem[Stark et~al.(2006)Stark, Breitkreutz, Reguly, Boucher, Breitkreutz,
  and Tyers]{biogrid}
Chris Stark, Bobby-Joe Breitkreutz, Teresa Reguly, Lorrie Boucher, Ashton
  Breitkreutz, and Mike Tyers.
\newblock Biogrid: a general repository for interaction datasets.
\newblock \emph{Nucleic Acids Research}, 34\penalty0 (Database-Issue):\penalty0
  535--539, 2006.

\bibitem[Wang et~al.(2012)Wang, Tang, Liu, and
  Wang]{Wang-etal12-flickr-dataset}
Xufei Wang, Lei Tang, Huan Liu, and Lei Wang.
\newblock Learning with multi-resolution overlapping communities.
\newblock \emph{Knowledge and Information Systems (KAIS)}, 2012.

\bibitem[Whang et~al.(2013)Whang, Gleich, and Dhillon]{Whang2013}
Joyce~Jiyoung Whang, David~F Gleich, and Inderjit~S Dhillon.
\newblock Overlapping community detection using seed set expansion.
\newblock In \emph{Proceedings of the 22nd ACM international conference on
  Conference on information \& knowledge management}, pages 2099--2108. ACM,
  2013.

\bibitem[Wu et~al.(2012)Wu, Li, So, Wright, and Chang]{wu2012learning}
Xiao-Ming Wu, Zhenguo Li, Anthony~M So, John Wright, and Shih-Fu Chang.
\newblock Learning with partially absorbing random walks.
\newblock In \emph{Advances in Neural Information Processing Systems}, pages
  3077--3085, 2012.

\bibitem[Wu et~al.(2015)Wu, Jin, Li, and Zhang]{Wu:2015}
Yubao Wu, Ruoming Jin, Jing Li, and Xiang Zhang.
\newblock Robust local community detection: On free rider effect and its
  elimination.
\newblock \emph{Proc. VLDB Endow.}, 8\penalty0 (7):\penalty0 798--809, February
  2015.
\newblock ISSN 2150-8097.

\bibitem[Yang and Leskovec(2012)]{Yang2012}
Jaewon Yang and Jure Leskovec.
\newblock Defining and evaluating network communities based on ground-truth.
\newblock In \emph{Proceedings of the ACM SIGKDD Workshop on Mining Data
  Semantics}, MDS '12, pages 3:1--3:8. ACM, 2012.
\newblock ISBN 978-1-4503-1546-3.

\bibitem[Yang and Leskovec(2015)]{Yang2015}
Jaewon Yang and Jure Leskovec.
\newblock Defining and evaluating network communities based on ground-truth.
\newblock \emph{Knowledge and Information Systems}, 42\penalty0 (1):\penalty0
  181--213, 2015.
\newblock ISSN 0219-1377.

\bibitem[Zachary(1977)]{Zachary1977}
Wayne~W. Zachary.
\newblock {An information flow model for conflict and fission in small groups}.
\newblock \emph{Journal of Anthropological Research}, 33:\penalty0 452--473,
  1977.

\bibitem[Zhu et~al.(2013{\natexlab{a}})Zhu, Lattanzi, and Mirrokni]{Zhu2013}
Zeyuan~A Zhu, Silvio Lattanzi, and Vahab Mirrokni.
\newblock A local algorithm for finding well-connected clusters.
\newblock In \emph{Proceedings of the 30th International Conference on Machine
  Learning (ICML-13)}, pages 396--404, 2013{\natexlab{a}}.

\bibitem[Zhu et~al.(2013{\natexlab{b}})Zhu, Lattanzi, and
  Mirrokni]{zhu2013local}
Zeyuan~Allen Zhu, Silvio Lattanzi, and Vahab Mirrokni.
\newblock Local graph clustering beyond {C}heeger's inequality.
\newblock \emph{arXiv preprint arXiv:1304.8132}, 2013{\natexlab{b}}.

\end{thebibliography}

\end{document}